\documentclass[a4paper,UKenglish,cleveref, autoref, thm-restate]{lipics-v2021}

\hideLIPIcs  



\usepackage{booktabs}
\usepackage{amssymb}
\usepackage{amsmath}
\usepackage{amsthm}
\usepackage{mathtools}

\usepackage{cleveref}
\usepackage{subcaption}
\usepackage[textwidth=1.5cm]{todonotes}
\usepackage{tabularx}

{%
  \addtocounter{theorem}{-1}
  \endgroup
}%

{%
  \addtocounter{theorem}{-1}
  \endgroup
}%

\usepackage{tikz}
\usetikzlibrary{
    arrows,
    arrows.meta,
    decorations.pathmorphing,
    decorations.pathreplacing,
    calligraphy,
    positioning,
    matrix, 
    patterns,
    automata,
    calc,
    tikzmark,
    backgrounds,
    shapes
}

\tikzset{
    label distance=-1mm,
    world/.style = {
        rounded rectangle,
        draw,
        minimum width = 2em,
        minimum height = 2em,
        inner sep = 0.33em,
        },
    smallWorld/.style = {minimum height = 1.66em},
    submodel/.style = {line width=1.2pt}
}



\newcommand{\CTLsubmodel}[1]{\protect\ensuremath{\normalfont\text{E-}\Submodels(#1)}}
\newcommand{\CTLsubmodelPlain}{\protect\ensuremath{\normalfont\text{E-}\Submodels}}
\newcommand{\Del}[1]{\protect\ensuremath{\mathsf{Del}#1}}
\newcommand{\DelP}{\protect\ensuremath{\mathsf{DelP}}}
\newcommand{\DelNP}{\protect\ensuremath{\mathsf{DelNP}}}
\newcommand{\Ptime}{\protect\ensuremath{\mathsf{P}}}
\newcommand{\NPtime}{\protect\ensuremath{\mathsf{NP}}}

\newcommand{\Prop}{\protect\ensuremath{\mathrm{PROP}}}
\newcommand{\Sol}{\protect\ensuremath{\mathop\mathrm{Sol}}}
\newcommand{\E}{\protect\ensuremath{\mathcal{E}}}
\newcommand{\M}{\protect\ensuremath{\mathcal{M}}}
\newcommand{\ALL}{\protect\ensuremath{\mathcal{ALL}}}

\newcommand{\EX}{\protect\ensuremath{\mathop\mathsf{EX}}}
\newcommand{\EG}{\protect\ensuremath{\mathop\mathsf{EG}}}
\newcommand{\EF}{\protect\ensuremath{\mathop\mathsf{EF}}}
\newcommand{\EU}{\protect\ensuremath{\mathbin\mathsf{EU}}}
\newcommand{\ER}{\protect\ensuremath{\mathbin\mathsf{ER}}}
\newcommand{\AX}{\protect\ensuremath{\mathop\mathsf{AX}}}
\newcommand{\AG}{\protect\ensuremath{\mathop\mathsf{AG}}}
\newcommand{\AF}{\protect\ensuremath{\mathop\mathsf{AF}}}
\newcommand{\AU}{\protect\ensuremath{\mathbin\mathsf{AU}}}
\newcommand{\AR}{\protect\ensuremath{\mathbin\mathsf{AR}}}
\newcommand{\CTL}{\protect\ensuremath{\mathsf{CTL}}}

\newcommand{\HAMPATH}{{\normalfont\textsc{HAMPATH}}}
\newcommand{\SAT}{{\normalfont\textsc{SAT}}}
\newcommand{\pmRed}{\protect\ensuremath{\leq_m^{\Ptime}}}
\newcommand{\ExtendSolE}[1]{{\normalfont\textsc{ExtendSol}\_#1}}
\newcommand{\ExistE}[1]{{\normalfont\textsc{Exist}\_#1}}
\newcommand{\ExtendSol}{{\normalfont\textsc{ExtendSol}}}
\newcommand{\Exist}{\ensuremath{\exists}}
\newcommand{\Ext}{\normalfont\textsc{Ext}}
\newcommand{\Submodel}{\normalfont\textsc{Submodel}}
\newcommand{\Submodels}{\normalfont\textsc{Submodels}}
\newcommand{\ExtendSubmodel}[1]{{\Ext\normalfont\Submodel(#1)}}
\newcommand{\ExtendSubmodelplain}{{\Ext\normalfont\Submodel}}
\newcommand{\ExistSubmodel}[1]{{\Exist\normalfont\Submodel(#1)}}
\newcommand{\ExistSubmodelplain}{{\Exist\normalfont\Submodel}}

\usepackage{xifthen}

\newcommand{\envcite}[2][]{\ifthenelse{\isempty{#1}}{\cite{#2}}{\cite{#2}, #1}}

\bibliographystyle{plainurl}

\title{Submodel Enumeration for CTL Is Hard} 


\author{Nicolas Fröhlich}{Leibniz Universität Hannover, Appelstrasse 9a, 30167 Hannover, Germany}{nicolas.froehlich@thi.uni-hannover.de}{https://orcid.org/0009-0003-5413-1823}{Funded by the German Research Foundation (DFG) under the project number ME4279/3-1.}

\author{Arne Meier}{Leibniz Universität Hannover, Appelstrasse 9a, 30167 Hannover, Germany}{meier@thi.uni-hannover.de}{https://orcid.org/0000-0002-8061-5376}{Partially funded by the German Research Foundation (DFG) under the project number ME4279/3-1.}

\authorrunning{N. Fröhlich and A. Meier} 

\Copyright{Nicolas Fröhlich and Arne Meier} 

\ccsdesc[500]{Theory of computation~Modal and temporal logics}
\ccsdesc[300]{Theory of computation~Problems, reductions and completeness}
\ccsdesc[500]{Theory of computation~Complexity theory and logic} 

\keywords{Enumeration Complexity, DelNP, Kripke Structures, CTL.} 

\category{} 

\relatedversion{} 




\nolinenumbers 

\EventEditors{John Q. Open and Joan R. Access}
\EventNoEds{2}
\EventLongTitle{42nd Conference on Very Important Topics (CVIT 2016)}
\EventShortTitle{CVIT 2016}
\EventAcronym{CVIT}
\EventYear{2016}
\EventDate{December 24--27, 2016}
\EventLocation{Little Whinging, United Kingdom}
\EventLogo{}
\SeriesVolume{42}
\ArticleNo{23}

\begin{document}

\maketitle

\begin{abstract}
    Expressing system specifications using Computation Tree Logic (CTL) formulas, formalising programs using Kripke structures, and then model checking the system is an established workflow in program verification and has wide applications in AI.
    In this paper, we consider the task of model enumeration, which asks for a uniform stream of output systems that satisfy the given specification.
    We show that, given a CTL formula and a system (potentially falsified by the formula), enumerating satisfying submodels is always hard for CTL---regardless of which subset of CTL operators is considered.
    As a silver lining on the horizon, we present fragments via restrictions on the allowed Boolean functions that still allow for fast enumeration.
\end{abstract}

\section{Introduction}
In artificial intelligence, temporal logic is used as a formal language to describe and reason about the temporal behaviour of systems and processes~\cite{aitl00,DBLP:books/lib/Berard01}.
One of the key applications of temporal logic in artificial intelligence is the formal specification and verification of temporal properties of software systems, such as real-time systems~\cite{DBLP:journals/csur/BelliniMN00,DBLP:journals/fcsc/Konur13,BLOM199635}, reactive systems~\cite{finger1993metatem}, and hybrid systems~\cite{da2021symbolic}.
Temporal logic can be used to specify the desired behaviour of these systems and to check that systems of that kind satisfy the specified properties.
This task is known as \emph{model checking} (MC) and is one of the most important reasoning tasks~\cite{DBLP:conf/aiml/Schnoebelen02}. 
In this context, the search for satisfying submodels is a useful approach to debugging faulty systems.

One of the central temporal logics for which the model checking problem is efficiently solvable (more precisely, the problem is complete for polynomial time) is the Computation Tree Logic CTL~\cite{DBLP:books/daglib/0007403-2}.
The logic is often used in the context of program verification and, accordingly, is well suited to our study.
CTL formulas enrich classical propositional logic with a variety of modal operators (next, until, global, future, release) that combine with so-called path quantifiers (existential and universal) to form CTL operators.

Kripke structures, which model the software system of interest, are essentially labelled directed graphs that have a total transition relation~\cite{kripke63}.
A submodel of a Kripke structure is defined with respect to all possible subsets for the state set and transition relation.
Note that the subset for the transition relation must still be total, otherwise it is not a valid submodel.
More formally, the problem we are interested in is defined as follows.
For a Kripke structure $\M$ and a CTL formula $\varphi$, list of all submodels $\M'$ of $\M$ such that $\M'$ satisfies $\varphi$.
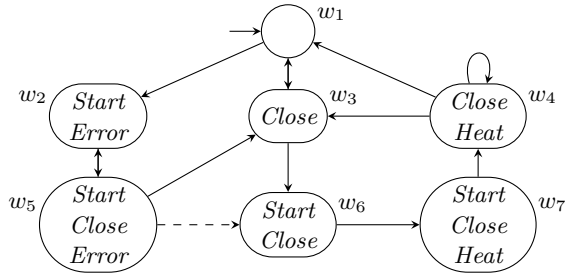
\begin{figure}
    \centering
    \begin{tikzpicture}[x=1.25cm, y=0.9cm]
        \node[world, label={ 5:\small $w_1$}, align=center, font=\small\itshape] (w1) at (0,1.25) {};
        \node[world, label={ 175:\small $w_2$}, align=center, font=\small\itshape] (w2) at (-2,0) {Start\\ Error};
        \node[world, label={ 5:\small $w_3$}, align=center, font=\small\itshape] (w3) at (0,0) {Close};
        \node[world, label={ 5:\small $w_4$}, align=center, font=\small\itshape] (w4) at (2,0) {Close\\ Heat};
        \node[world, label={ 175:\small $w_5$}, align=center, font=\small\itshape] (w5) at (-2,-1.6) {Start\\ Close\\ Error};
        \node[world, label={ 5:\small $w_6$}, align=center, font=\small\itshape] (w6) at (0,-1.6) {Start\\ Close};
        \node[world, label={ 5:\small $w_7$}, align=center, font=\small\itshape] (w7) at (2,-1.6) {Start\\ Close\\ Heat};
        
        \draw[stealth-] (w1.west) -- ++(-0.33,0);
        \draw [-stealth] (w1) -- (w2);
        \draw [-stealth] (w1) -- (w3);
        \draw [-stealth] (w2) -- (w5);
        \draw [-stealth] (w3) -- (w1);
        \draw [-stealth] (w3) -- (w6);
        \draw [-stealth] (w4) -- (w1);
        \draw [-stealth] (w4) -- (w3);
        \draw [-stealth] (w4) edge[loop above, >=Stealth, looseness = 6] (w4);
        \draw [-stealth] (w5) -- (w2);
        \draw [-stealth] (w5) -- (w3);
        \draw [-stealth] (w6) -- (w7);
        \draw [-stealth] (w7) -- (w4);

        \draw [-stealth, dashed] (w5) -- (w6);
    \end{tikzpicture}
    \caption{Kripke model of the microwave oven example. While the structure containing the dashed edge $(w_5, w_6)$ does not satisfy the constraint $\varphi = \AG(\textit{Error} \to \lnot \textit{Heat} \AU \lnot \textit{Start})$, the submodel without the edge does.}
    \label{fig:ex:microwave}
\end{figure}
Let us illustrate the idea with an example.
\begin{example}
    Consider the Kripke model $\M$ shown in \Cref{fig:ex:microwave}, which models the behavior of a microwave oven, a well-known example from \cite{DBLP:books/daglib/0007403-2}.
    Next, consider the constraint $\varphi = \AG(\textit{Error} \to \lnot \textit{Heat} \AU \lnot \textit{Start})$, which says that any path starting in a world labeled with the \textit{Error} proposition must first reach a world where $\lnot Start$ holds, before \textit{Heat} becomes true.
    With the dashed edge from $w_5$ to $w_6$ this constraint obviously does not hold in $\M$.
    In contrast, the submodel $\M'$ of $\M$ without the dashed edge satisfies $\varphi$.
    Thus, as an automated repair or a suggestion in debugging one might want to consider the submodels of $\M$.
\end{example}

Also other areas of research benefit from this kind of approach as follows.
For bounded model checking~\cite{DBLP:journals/ac/BiereCCSZ03}, the size of the state space can easily become very large for complex systems. For such systems, Biere et al.\ suggest combining model checking with SAT solvers, which allow faster exploration of the state space. 
Similarly, Gupta~et~al.~\cite{DBLP:conf/fmcad/GuptaYAG00} showed that similar things have been observed in the context of BDD-based symbolic algorithms for image processing.
While one might think that the work of Sullivan et al.~\cite{DBLP:conf/icfem/SullivanMK19} is closely related to our setting, the authors work with propositional logic in the setting of the specification language Alloy, which is based on first-order logic. 
This is somewhat different from us, as we work with CTL. However, there is work on CTL-live model checking for first-order logic validity checking~\cite{DBLP:conf/fmcad/VakiliD14}, but this would be a different direction to our approach.
Lauri and Dutta~\cite{DBLP:conf/aaai/LauriD19}, following an ML perspective, devise an ML framework that attempts to shrink the search space and augment the solver with some help.
Recently, this topic has been investigated in the context of plain modal logic~\cite{DBLP:conf/aiml/FrohlichM22}.

Classically, the task of enumerating models is very different from counting the number of existing models or deciding on the existence of such models.
Although enumeration algorithms are usually exponential time algorithms, this does not preclude practical applications.
In addition to theoretical studies, there are a number of application scenarios~\cite{exexalg10}, e.g., recommender systems~\cite{DBLP:journals/aim/FriedrichZ11}, ASP~\cite{DBLP:conf/aiia/AlvianoD16}, or ML~\cite{DBLP:conf/aaai/LauriD19}. 
The formal foundations were originally laid by Johnson~et~al.~\cite{JohnsonYannakakisPapadimitriou1988}.
Intuitively, an enumeration algorithm is deterministic and produces a uniform stream of output solutions avoiding duplicates.
This solution flow is mathematically modelled by the notion of the \emph{delay}, i.e., an upper bound on the elapsed time between printing two successive solutions (or the time before the first, resp., after the last, solution is returned).
In~\cite{DBLP:journals/dam/CreignouKPSV19}, Creignou~et~al. introduced a framework for intrinsically hard enumeration problems.
Here, the polynomial hierarchy and the concept of oracle machines have been utilised to present notions that allow for proving intractability bounds for enumeration problems.
The complexity class $\DelP$ describes ``efficient'' enumeration, that is, a delay that is polynomially bounded in the input length, while the complexity class $\DelNP$ contains intractable and, accordingly, difficult enumeration problems.
Solutions to instances of problems in $\DelNP$ cannot efficiently be produced unless the (classical) complexity classes $\Ptime$ and $\NPtime$ coincide.

While the tractability of MC for CTL formulas is known to be $\Ptime$-complete, the complexity of enumerating satisfying submodels is still open.

\paragraph*{Contributions.} In this paper, we fill this gap and present a thorough study of the complexity of the submodel enumeration problem in the context of CTL.
We will see that in general the problem is complete for the class $\DelNP$; so it is reasonable to consider restrictions that aim for tractable enumeration cases.
However, our answer in this direction is on the negative side, showing that any restriction on the CTL operator side does not allow faster enumeration algorithms (assuming $\Ptime\neq\NPtime$).
Finally, we identify some further Boolean restrictions that still allow for $\DelP$ algorithms.

\paragraph*{Related works.}
The work of Schnoebelen~(\cite{DBLP:conf/aiml/Schnoebelen02}) considers the classical model checking question for temporal logics.
There is a study on the complexity of fragments of the model checking problem~\cite{DBLP:journals/acta/KrebsMM19} for CTL, but it has no direct impact on our results as the problems are situated in $\Ptime$ (or classes within).

\paragraph*{Organisation.}
At first, we introduce the necessary preliminaries on temporal logic and enumeration complexity.
Then we prove our dichotomy theorem.
Finally, we conclude.

\section{Preliminaries}
In this section, we assume basic familiarity with computational complexity~\cite{DBLP:books/daglib/0018514} and will make use of the framework for hard enumeration problems~\cite{DBLP:journals/dam/CreignouKPSV19}.
Furthermore, we will define the temporal logic $\CTL$, introduce submodels, and enumeration complexity.

\paragraph*{Computational Tree Logic.} 
We follow standard notation of model checking~\cite{DBLP:books/daglib/0007403-2}.
Let $\Prop$ be an infinite, countable set of propositions.
The set of well-formed $\CTL$ formulas is defined with the following BNF
\[
    \varphi \Coloneqq
    \top \mid
    p \mid
    \lnot \varphi \mid
    \varphi \lor \varphi \mid
    \varphi \land \varphi \mid
    \mathop{\mathcal P\mathcal T} \varphi \mid
    \varphi \mathbin{\mathcal P\mathcal T'} \varphi, 
\]
where $p \in \Prop, \mathcal P \in \{\mathsf{E}, \mathsf{A}\}, \mathcal T \in \{\mathsf{X}, \mathsf{F}, \mathsf{G}\}, \mathcal T' \in \{\mathsf{U}, \mathsf{R}\}$.
This results in ten $\CTL$ operators, consisting of six unary operators $\EX$, $\EX$, $\EF$, $\AF$, $\EG$, $\AG$ and four binary operators $\EU$, $\AU$, $\ER$, $\AR$.
The set $\ALL$ contains all ten $\CTL$ operators.
We will call $\land,\lor,\lnot$ Boolean connectors.

In the following we defined a special version of Kripke models.
\begin{definition}
    A \emph{rooted Kripke model} is a tuple $\M = (W, R, \eta, r)$ where
    \begin{itemize}
        \item $W$ is a non-empty set of \emph{worlds} (or \emph{states}),
        \item $R \subseteq W \times W$ is a total, binary transition relation on $W$
        \item $\eta \colon W \to 2^\Prop$ is an assignment function, that maps each world $w$ to a set $\eta(w)$ of propositions, and
        \item $r \in W$ is the \emph{root}.
    \end{itemize}
\end{definition}

\begin{definition}
    Let $\M = (W, R, \eta, r)$ be a rooted Kripke model.
    A \emph{path} $\pi$ in $\M$ is an infinite sequence of worlds $w_1,w_2,\dots $ such that $(w_i,w_{i+1}) \in R$ for all $i \geq 1$.
    We write $\pi[i]$ to denote the $i$th world on the path $\pi$.
    For a world $w \in W$ we define $\Pi(w) \coloneqq \{\,\pi \mid \pi[1] = w\,\}$ as the  (possibly infinite) set of all infinite paths of $\M$ starting with $w$.
\end{definition}
\begin{definition}
    Let $\M$ be a rooted Kripke model and $\varphi, \psi$ be CTL formulas.

    \begin{tabbing}
        \indent\=$\M, w \models \varphi \ER \psi$\hspace{0.33em} \= iff \= Rechts \kill
        \>$\M, w \models \top$\>\> always,\\
        \>$\M, w \models p$\> iff\> $p \in \eta(w)$ with $p \in \Prop$,\\
        \>$\M, w \models \lnot \varphi$\> iff\> $\M, w \not\models \varphi$,\\
        \>$\M, w \models \varphi \land \psi$\> iff \> $\M, w \models \varphi$ and $\M, w \models \psi$,\\
        \>$\M, w \models \varphi \lor \psi$\> iff\> $\M, w \models \varphi$ or $\M, w \models \psi$,\\
        \>$\M, w \models \EX \varphi$\> iff\> $\exists \pi \in \Pi(w): \M, \pi[2] \models \varphi$,\\
        \>$\M, w \models \AX \varphi$\> iff\> $\forall \pi \in \Pi(w): \M, \pi[2] \models \varphi$,\\
        \>$\M, w \models \EF \varphi$\> iff\> $\exists \pi \in \Pi(w)~\exists k \geq 1: \M, \pi[k] \models \varphi$,\\
        \>$\M, w \models \AF \varphi$\> iff\> $\forall \pi \in \Pi(w)~\exists k \geq 1: \M, \pi[k] \models \varphi$,\\
        \>$\M, w \models \EG \varphi$\> iff\> $\exists \pi \in \Pi(w)~\forall k \geq 1: \M, \pi[k] \models \varphi$,\\
        \>$\M, w \models \AG \varphi$\> iff\> $\forall \pi \in \Pi(w)~\forall k \geq 1: \M, \pi[k] \models \varphi$,\\
        \>$\M, w \models \varphi \EU \psi$\> iff\> $\exists \pi \in \Pi(w)~\exists k \geq 1: \M, \pi[k] \models \psi$\\
        \>\>\> and $\forall i < k: \M, \pi[i] \models \varphi$,\\
        \>$\M, w \models \varphi \AU \psi$\> iff\> $\forall \pi \in \Pi(w)~\exists k \geq 1: \M, \pi[k] \models \psi$\\
        \>\>\> and $\forall i < k: \M, \pi[i] \models \varphi$,\\
        \>$\M, w \models \varphi \ER \psi$\> iff\> $\exists \pi \in \Pi(w)~\forall k \geq 1: \M, \pi[k] \models \psi$\\
        \>\>\> or $\exists i < k: \M, \pi[i] \models \varphi$,\\
        \>$\M, w \models \varphi \AR \psi$\> iff\> $\forall \pi \in \Pi(w)~\forall k \geq 1: \M, \pi[k] \models \psi$\\
        \>\>\> or $\exists i < k: \M, \pi[i] \models \varphi$.
    \end{tabbing}
    Furthermore, $\bot \coloneqq \lnot \top$ is constant false.
    Also omit the root in $\M, r \models \varphi$ and just write $\M\models \varphi$ instead.
    A formula $\varphi$ is then said to be \emph{satisfied by model $\M$}, if $\M\models \varphi$ is true.
\end{definition}
\noindent Notice an observation regarding the semantics of $\CTL$.

\begin{observation}\label{ob:CTL_equis}
    The following equivalences hold:
    \begin{align*}
         & \EX \varphi \equiv \lnot \AX (\lnot \varphi), \AG \varphi \equiv \lnot \EF (\lnot \varphi), \EG \varphi \equiv \lnot \AF (\lnot \varphi) \\
         & \EG \varphi \equiv \bot \ER \varphi, \AG \varphi \equiv \bot \AR \varphi                                                                 \\
         & \EF \varphi \equiv \top \EU \varphi, \AF \varphi \equiv \top \AU \varphi                                                                 \\
         & \varphi \ER \psi \equiv \lnot (\lnot \varphi \AU \lnot \psi), \varphi \AR \psi \equiv \lnot (\lnot \varphi \EU \lnot \psi)
    \end{align*}
\end{observation}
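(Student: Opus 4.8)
The plan is to verify each listed equivalence by fixing an arbitrary rooted Kripke model $\M$ and an arbitrary world $w$, unfolding the satisfaction relation $\M,w\models\cdot$ on both sides, and checking that the two resulting first-order conditions over $\Pi(w)$ coincide. Every equivalence involves only a bounded nesting of operators, so this is a short case analysis rather than an induction. The two facts I will reuse throughout are: (i) $\M,w\models\top$ for every $w$ and $\M,w\not\models\bot$ for every $w$, so that a conjunct of the form ``$\forall i<k:\M,\pi[i]\models\top$'' is vacuously true and a disjunct ``$\exists i<k:\M,\pi[i]\models\bot$'' is always false; and (ii) the elementary logical dualities $\lnot\forall\equiv\exists\lnot$, $\lnot\exists\equiv\forall\lnot$, and $\lnot(\alpha\land\beta)\equiv\lnot\alpha\lor\lnot\beta$, applied to the quantifier prefixes $\forall\pi\in\Pi(w)$ / $\exists\pi\in\Pi(w)$, to $\forall k\ge1$ / $\exists k\ge1$, and to the bounded block $\forall i<k$ / $\exists i<k$.

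First I would dispatch the three De Morgan-style dualities in the first line. For $\EX\varphi\equiv\lnot\AX(\lnot\varphi)$, unfolding gives: $\M,w\models\lnot\AX(\lnot\varphi)$ iff it is not the case that $\forall\pi\in\Pi(w):\M,\pi[2]\models\lnot\varphi$, i.e. $\exists\pi\in\Pi(w):\M,\pi[2]\models\varphi$, which is precisely the clause defining $\M,w\models\EX\varphi$. The cases $\AG\varphi\equiv\lnot\EF(\lnot\varphi)$ and $\EG\varphi\equiv\lnot\AF(\lnot\varphi)$ are handled the same way: push the outer $\lnot$ past the two quantifier blocks ($\exists\pi\,\exists k$, resp.\ $\forall\pi\,\exists k$) and then past $\lnot\varphi$, and compare with the definition of $\AG$, resp.\ $\EG$.

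Next I would treat the four equivalences of lines two and three that re-express $\EG,\AG,\EF,\AF$ through the binary operators. For $\EG\varphi\equiv\bot\ER\varphi$: substituting $\bot$ for the first argument in the semantics of $\ER$ makes the disjunct ``$\exists i<k:\M,\pi[i]\models\bot$'' false at every $k$, so $\M,w\models\bot\ER\varphi$ reduces to $\exists\pi\in\Pi(w)\,\forall k\ge1:\M,\pi[k]\models\varphi$, which is exactly $\M,w\models\EG\varphi$; the case $\AG\varphi\equiv\bot\AR\varphi$ is identical with $\forall\pi$ in place of $\exists\pi$. Dually, for $\EF\varphi\equiv\top\EU\varphi$ the conjunct ``$\forall i<k:\M,\pi[i]\models\top$'' in the semantics of $\EU$ is vacuously true, leaving $\exists\pi\in\Pi(w)\,\exists k\ge1:\M,\pi[k]\models\varphi$, which is $\EF\varphi$; likewise $\AF\varphi\equiv\top\AU\varphi$.

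Finally, for the two until/release dualities in the last line, I would unfold $\lnot(\lnot\varphi\AU\lnot\psi)$ as ``not $\big(\forall\pi\in\Pi(w)\,\exists k\ge1:\M,\pi[k]\models\lnot\psi\text{ and }\forall i<k:\M,\pi[i]\models\lnot\varphi\big)$'', push the outer negation inward using the dualities above to obtain ``$\exists\pi\in\Pi(w)\,\forall k\ge1:\M,\pi[k]\models\psi\text{ or }\exists i<k:\M,\pi[i]\models\varphi$'', and note that this is verbatim the clause defining $\varphi\ER\psi$; exchanging $\exists\pi$ and $\forall\pi$ gives $\varphi\AR\psi\equiv\lnot(\lnot\varphi\EU\lnot\psi)$. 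I do not expect a genuine obstacle: the statement is pure bookkeeping in the semantics. The only point worth recording explicitly is that $\Pi(w)\neq\emptyset$ for every $w$ (since $R$ is total), so the universal path quantifiers are never vacuous; this keeps the semantics well behaved, even though the equivalences themselves are valid as plain quantifier manipulations.
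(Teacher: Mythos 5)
Your proposal is correct, and the paper itself states this as an observation without proof; the routine semantic unfolding you carry out (using the vacuity of $\top$/$\bot$ in the bounded blocks and the standard quantifier dualities) is exactly the verification the paper implicitly relies on. No gaps.
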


Now, we formally introduce submodels of Kripke models.
Given two Kripke models  $\M = (W, R, \eta, r)$ and $\M' = (W', R', \eta, r)$.
We call $\M'$ a \emph{submodel (of $\M$)}, if $W' \subseteq W$, $R' \subseteq R$, and $R'$ is total.
For a function $f\colon A\to B$ we write $f\!\upharpoonright_{C}$, given $C\subseteq A$, for the restriction of $f$ to domain $C$.

\begin{definition}
    Let $\M = (W, R, \eta, r)$ be a Kripke model.
    $\M' = (W', R', \eta\!\upharpoonright_{W'}, r)$ is a \emph{connected submodel} of $\M$, denoted by $\M' \subseteq \M$, if
    (1.) $W' \neq \emptyset$,
    (2.) $\M'$ is a submodel of $\M$, and
    (3.) for all $w \in W'$ there exists a path $\pi \in \Pi(r)$ and $i \geq 1$ with $\pi[i] = w$.
\end{definition}
Clearly, worlds that violate (3.) cannot have influence on the satisfiability of CTL formulas.
Yet, an enumeration algorithm printing connected submodels could trivially be extended to include non-connected submodels.

Additionally we want to introduce an alternative notation for submodels, $\M' = \M - D$, with $D = (D_W, D_R)$ for $r\notin D_W$ a tuple consisting of a set of worlds and a set of tuples, and $W' = W \setminus D_W$ and  $R' = R \setminus D_R$, for $\M = (W, R, \eta, r)$ and $\M' = (W', R', \eta\!\upharpoonright_{W'}, r)$. 
Here, $D$ is called the set of \emph{deletions}.

A submodel $\M'$ is \emph{satisfying $\varphi$} if $\M' \models \varphi$.
The formula $\varphi$ is often omitted, if it can be deduced from the context.

\paragraph*{Enumeration Complexity.}
The Turing machine, as one of the standard machine models used in complexity theory, proves to be problematic for the setting of enumeration algorithms.
Its linear nature in accessing data prevents a polynomial delay when traversing exponentially large data sets, even if the actual data read is small.
As a result, random access machines (RAMs) are the common machine model of choice~\cite{DBLP:journals/eatcs/Strozecki19}.

\begin{definition}
    Let $\Sigma$ be a finite alphabet. An \emph{enumeration problem} is a tuple $\E = (I,\Sol)$, where
    \begin{itemize}
        \item $I \subseteq \Sigma^*$ is the set of \emph{instances},
        \item $\Sol \colon I \to \mathcal{P}(\Sigma^*)$ is a function that maps each instance $x \in I$ to a set of \emph{solutions (of $x$)}, and
        \item there exists a polynomial $p$ such that $\forall x\in I\;\forall y\in \Sol(x)$ we have that $|y|\leq p(|x|)$.
    \end{itemize}\label{def:enumeration problem}
\end{definition}
Note that sometimes one is interested in dropping the last requirement of the previous definition~\cite{DBLP:journals/eatcs/Strozecki19}.

\begin{definition}
    Let $\E = (I,\Sol)$ be an enumeration problem.
    An algorithm $\mathcal{A}$ is called an \emph{enumeration algorithm} for $\E$, if for every instance $x \in I$: $\mathcal{A}(x)$ terminates after a finite sequence of steps and $\mathcal{A}(x)$ prints exactly $\Sol(x)$ without duplicates, where $\mathcal{A}(x)$ denotes the \emph{computation of $\mathcal A$ on input~$x$}.
\end{definition}

We now define the mentioned delay of an enumeration algorithm.

\begin{definition}
    \label{def:delay}
    Let $\E = (I,\Sol)$ be an enumeration problem, $\mathcal{A}$ be an enumeration algorithm for $\E$, $x \in I$ be an instance and $n = |\Sol(x)|$ the number of solutions of $x$.
    We define the
    \begin{itemize}
        \item \emph{$i$th delay} of $\mathcal{A}(x)$ as the elapsed time between the output of the $i$th and $(i+1)$st solution of $\Sol(x)$,
        \item \emph{$0$th delay} as the \emph{precomputation time}, i.e, the elapsed time before the first output of $\mathcal A(x)$, and
        \item \emph{$n$th delay} as the \emph{postcomputation time}, i.e., the elapsed time after the last output of $\mathcal A(x)$ until it terminates.
    \end{itemize}
    We say that $\mathcal A$ has \emph{delay $f$}, for $f \colon \mathbb{N} \to \mathbb{N}$, if for all $x\in I$ and all $0\leq i \leq n$ the $i$th delay of $\mathcal A(x)$ is in $O(f(|x|))$.
\end{definition}

\paragraph*{Hard enumeration.}
We will shortly introduce the framework of hard enumeration by Creignou~et~al. \cite{DBLP:journals/dam/CreignouKPSV19}.
The idea is to analyse enumeration problems beyond polynomial delay by introducing a hierarchy of complexity classes similar to the polynomial-time hierarchy and reduction notions for enumeration problems.

We begin by defining two decision problems that naturally arise in the context of enumeration.
Let $\E = (I, \Sol)$ be an enumeration problem over the alphabet $\Sigma$.
The first decision problem $\ExistE{\E}$ asks, given an instance $x$, for the existence of any solutions, that is, $\Sol(x)$ is nonempty.

The second decision problem is concerned with obtaining new solutions.
This is the question whether, given an instance $x$ and a partial solution $y$, can we extend the partial solution by a word $y' \subseteq \Sigma^*$ such that $yy'$ is a solution of $\E$, where $yy'$ denotes the concatenation of $y$ and $y'$.
\begin{center}
    \begin{tabularx}{\columnwidth}{rX}
        \toprule
        \textbf{Problem:}  & $\ExtendSolE{\E}$                               \\
        \midrule
        \textbf{Input:}    & Instance $x$, partial solution $y$              \\
        \textbf{Question:} & Is there some $y'$ such that $yy' \in \Sol(x)$? \\
        \bottomrule
    \end{tabularx}
\end{center}

As mentioned before, we use RAMs instead of Turing machines in the context of enumeration complexity.
We now want to further extend the underlying machine model, by introducing decision oracles.
Classically, when analysing runtime, or in this case delay, algorithm calls to its oracle are always charged as a single step, regardless of the time the oracle takes.
Our machines can write into special registers and the oracle will consider these as well as all consecutive non-empty registers as its input.
A query to the oracle then occurs when the machine enters a special question state and will transition into a positive/negative state if the oracle answers ``yes''/``no''.
Now, start with enumeration complexity classes with oracles.
\begin{definition}[{\envcite[Def.~2]{DBLP:journals/dam/CreignouKPSV19}}]
    Let $\E$ be an enumeration problem, and $\mathcal C$ a decision complexity class. Then we say that $\E \in \Del{\mathcal{C}}$ if there is a RAM $M$ with oracle $L$ in $\mathcal C$ and a polynomial $p$, such that for any instance $x$, the RAM $M$ enumerates $\Sol(x)$ with delay $p(|x|)$.
    Moreover, the size of every oracle call is bound by $p(|x|)$.
\end{definition}
In this paper, the enumeration classes of interest are when $\mathcal C$ is either $\Ptime$, or $\NPtime$; so $\DelP$ and $\DelNP$.

The following \Cref{prop:ExtendSolInC-EInDelC} as well as \Cref{th:exist_NP-hard_E_DelNP-hard} are both simplified versions of results presented in~\cite{DBLP:journals/dam/CreignouKPSV19}.
While Creignou~et~al. considered the full polynomial hierarchies in their proofs, here we are only concerned with the $\Ptime$ and $\NPtime$ cases.

\begin{proposition}[{\envcite[Prop.~6]{DBLP:journals/dam/CreignouKPSV19}}]
    \label{prop:ExtendSolInC-EInDelC}
    Let $\E = (I, \Sol)$ be an enumeration problem and $\mathcal C \in \{\Ptime, \NPtime\}$.
    If $\ExtendSolE{\E} \in \mathcal C$ then $\mathcal E \in \Del{\mathcal C}$.
\end{proposition}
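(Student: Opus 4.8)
The plan is to use the classical \emph{backtracking search} (or \emph{flashlight}) technique: we traverse the tree of all prefixes of potential solutions of $x$ in a depth-first manner, using the oracle for $\ExtendSolE{\E}$ to prune, right away, every subtree that contains no solution. By the third condition in \Cref{def:enumeration problem}, every solution of an instance $x$ has length at most $\ell \coloneqq p(|x|)$ for a fixed polynomial $p$, so this prefix tree has depth at most $\ell$ and out-degree $|\Sigma|$; hence every root-to-leaf path is short, which is what makes a polynomial delay plausible.

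Before running the search I would remove one technical nuisance: the oracle tells us whether a string $y$ \emph{extends} to a solution, not whether $y$ \emph{is} a solution, and in general solutions of $x$ may be prefixes of one another. To sidestep this, I would first pad: replace $\E$ by the enumeration problem $\E' = (I, \Sol')$ in which $z \in \Sol'(x)$ iff $z = y\#^{\ell - |y|}$ for some $y \in \Sol(x)$, where $\#$ is a fresh symbol not in $\Sigma$ and we now work over $\Sigma \cup \{\#\}$. Then all solutions of $x$ under $\E'$ have length exactly $\ell$, the map $y \mapsto y\#^{\ell-|y|}$ is a polynomial-time computable bijection between $\Sol(x)$ and $\Sol'(x)$ with polynomial-time computable inverse, and one checks easily that $\ExtendSolE{\E'} \pmRed \ExtendSolE{\E}$; since $\mathcal C \in \{\Ptime, \NPtime\}$ is closed under $\pmRed$, this gives $\ExtendSolE{\E'} \in \mathcal C$. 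The payoff is that, for $|y| < \ell$, the oracle answer on $(x,y)$ is ``yes'' exactly when $y$ is a proper prefix of some solution of $x$, and for $|y| = \ell$ it is ``yes'' exactly when $y$ is itself a solution of $x$.

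With this in hand, the RAM $M$ with oracle $\ExtendSolE{\E'}$ works as follows. It first queries $(x,\varepsilon)$; if the answer is ``no'', then $\Sol'(x) = \emptyset$ and $M$ halts. Otherwise it performs a depth-first traversal maintaining the current prefix $y$ on a stack: at a node $y$ with $|y| < \ell$ it runs through the symbols $\sigma \in \Sigma \cup \{\#\}$ in a fixed order, queries $(x, y\sigma)$, and descends into $y\sigma$ at the first $\sigma$ answering ``yes''; at a node $y$ with $|y| = \ell$ it outputs the de-padded solution and backtracks; on backtracking from $y\sigma$ it resumes scanning the remaining children of $y$, and so on. The crucial invariant is that $M$ visits only nodes that are prefixes of actual solutions, since it descends solely on ``yes'' answers, so no subtree it enters is a dead end. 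Consequently, between two consecutive outputs (and likewise before the first and after the last) $M$ touches at most $O(\ell)$ tree nodes, issues at most $|\Sigma \cup \{\#\}| = O(1)$ oracle queries plus $O(\ell)$ further RAM steps per node, and every oracle call has input size polynomially bounded in $|x|$. Hence the delay is polynomial in $|x|$ and the oracle-call sizes are polynomially bounded, which is precisely $\E' \in \Del{\mathcal C}$; translating outputs back through the de-padding bijection yields $\E \in \Del{\mathcal C}$.

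The main obstacle, and the only step that is not essentially mechanical, is the first one: distinguishing ``$y$ is a solution'' from ``$y$ merely extends to a solution'' using only an oracle for the latter, which the padding trick resolves by forcing all solutions to a common length. Everything else is bookkeeping, but one should note that correctness of the delay bound hinges entirely on the pruning, i.e.\ on the fact that the search never descends into a subtree devoid of solutions.
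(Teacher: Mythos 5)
The paper itself gives no proof of this proposition---it is imported verbatim from Creignou et al.---so there is no in-paper argument to compare against. Your prefix-search (flashlight) algorithm, pruning with the $\ExtendSolE{\E}$ oracle, is exactly the standard technique behind the cited result, and your analysis of the pruned depth-first traversal (between consecutive outputs the machine visits $O(\ell)$ nodes, issues a constant number of queries per node, and every query has size at most $|x|+\ell$) is correct as far as it goes.

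However, the padding step contains a genuine gap, namely the sentence ``one checks easily that $\ExtendSolE{\E'} \pmRed \ExtendSolE{\E}$''. This claim is false in general. For an input $(x, u\#^j)$ with $u \in \Sigma^*$ and $j \geq 1$, membership in $\ExtendSolE{\E'}$ is equivalent to ``$u \in \Sol(x)$ and $|u|+j \leq \ell$'', i.e.\ to the \emph{check} problem for $\E$, and that problem does not reduce to $\ExtendSolE{\E}$. Concretely, let $B$ be an arbitrarily hard set and put $\Sol(x) = \{0, \varepsilon\}$ if $x \in B$ and $\Sol(x) = \{0\}$ otherwise; this satisfies \Cref{def:enumeration problem}, and $\ExtendSolE{\E}$ is trivially in $\Ptime$ (the answer is ``yes'' exactly for $y \in \{\varepsilon, 0\}$), yet $\ExtendSolE{\E'}(x, \#)$ asks precisely whether $x \in B$. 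So the padding does not resolve the difficulty you correctly identified (telling ``$y$ is a solution'' apart from ``$y$ merely extends to a solution''); it only relocates it into the unproved reduction, and your search would in fact query the oracle on exactly such $\#$-suffixed strings. The same example shows that the proposition is not literally true under \Cref{def:enumeration problem} alone: the missing ingredient is the standard additional hypothesis that deciding $y \in \Sol(x)$ lies in $\mathcal C$ as well---harmless for every instantiation in this paper, where it amounts to $\CTL$ model checking and is hence in $\Ptime$. With that hypothesis your argument repairs itself without any padding: at each visited node $y$ first decide $y \in \Sol(x)$ (one further $\mathcal C$-oracle call) and output $y$ if so, then descend into the children on which $\ExtendSolE{\E}$ answers ``yes''; the delay bound survives because every leaf of the pruned tree is a solution.
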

\Cref{prop:ExtendSolInC-EInDelC} allows for membership results for enumeration problems using the corresponding decision problem $\ExtendSol$.
This technique will prove particularly useful when showing membership in $\DelNP$, as constructing enumeration algorithms with oracles can be quite difficult.

We now give the necessary definitions to show hardness results for enumeration problems.
The first definition introduces yet another machine model, which can then be used to define a reduction from one enumeration problem to another.
\begin{definition}[{\envcite[Def.~6]{DBLP:journals/dam/CreignouKPSV19}}]
    Let $\E$ be an enumeration problem.
    An \emph{Enumeration Oracle Machine with an enumeration oracle $\E$}, abbreviated as EOM\_$\E$, is a RAM that has a sequence of new registers $A_e, O^e(0), O^e(1), \dots$ and a new instruction NOO (next oracle output).
    An EOM\_$\E$ is \emph{oracle-bounded}, if the size of all inputs to the oracle is at most polynomial in the size of the input to the EOM\_\E.
\end{definition}
Note that the sequence of registers as input is only necessary for EOM\_$\E$ that are not oracle-bounded, to allow input sizes larger than polynomial.
\begin{definition}[{\envcite[Def.~7]{DBLP:journals/dam/CreignouKPSV19}}]
    Let $\E = (I , \Sol)$ be an enumeration problem and $\rho_1, \rho_2, \dots$ be the run of an EOM\_$\E$ and assume that the $k$th instruction is NOO, that is, $\rho_k =$ NOO.
    Denote with $x_i$ the word stored in $O^e(0), O^e(1), \dots$ at step $i$.
    Let $K = \{\rho_j \in \{\rho_1, \dots, \rho_{k-1}\} \mid \rho_j = \text{NOO and } x_j = x_k\}$.
    Then the \emph{oracle output $y_k$ in $\rho_k$} is defined as an arbitrary $y_k \in \Sol(x_k)$ such that $y_k$ has not been the oracle output in any $\rho_j \in K$.
    If no such $y_k$ exists, then the oracle output in $\rho_k$ is undefined.

    On executing NOO in step $\rho_k$, if the oracle output $y_k$ is undefined, then register $A_e$ contains some special symbol in step $\rho_{k+1}$; otherwise it contains $y_k$.
\end{definition}
\begin{definition}[$D$-reductions]
    Let $\E$ and $\E'$ be enumeration problems.
    We say that $\E$ reduces to $\E'$ via \emph{$D$-reduction}, $\E \leq_D {\E'}$, if there is an oracle-bounded EOM\_${\E'}$ that enumerates $\E$ in $\DelP$ and is independent of the order in which the $\E'$-oracle enumerates its answers.
\end{definition}

The next result shows that one can use the decision problem $\ExistE{\E}$ to show hardness of the corresponding enumeration problem $\E$.

\begin{proposition}[{\envcite[Theorem 13]{DBLP:journals/dam/CreignouKPSV19}}]
    \label{th:exist_NP-hard_E_DelNP-hard}
    Let $\E = (I, \Sol)$ be an enumeration problem.
    If $\ExistE{\E}$ is $\NPtime$-hard, then $\E$ is $\DelNP$-hard via D-reductions.
\end{proposition}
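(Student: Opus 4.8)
The plan is to unfold the definition of $\DelNP$-hardness: I would take an arbitrary enumeration problem $\E'' \in \DelNP$ and exhibit a $D$-reduction $\E'' \leq_D \E$. By definition of $\DelNP$ there is a RAM $M$ with a decision oracle $L \in \NPtime$ and a polynomial $p$ such that, on every input $x$, $M$ enumerates the solutions of $\E''$ with delay $p(|x|)$ and every query $M$ sends to $L$ has length at most $p(|x|)$. Since $\ExistE{\E}$ is $\NPtime$-hard and $L \in \NPtime$, there is a polynomial-time many-one reduction $f$ with $q \in L \iff f(q) \in \ExistE{\E}$; unravelling the definition of $\ExistE{\E}$ (extending $\Sol$ by $\emptyset$ on malformed strings and assuming, as usual, that $f$ outputs well-formed instances of $\E$), this says $q \in L \iff \Sol(f(q)) \neq \emptyset$. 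The key point is that the enumeration oracle for $\E$ can settle exactly this emptiness question in a single step, no matter how hard $\ExistE{\E}$ itself might be.

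Concretely, I would construct an oracle-bounded EOM $M'$ with enumeration oracle $\E$ that simulates $M$ on its input $x$ step by step and forwards every solution that $M$ prints. The only nontrivial part is the simulation of $M$'s oracle queries, for which $M'$ keeps a dictionary indexed by instances of $\E$. When $M$ queries $L$ on a word $q$, $M'$ computes $z \coloneqq f(q)$ in polynomial time and looks $z$ up in the dictionary; if $z$ is present, $M'$ reuses the stored bit. Otherwise $M'$ writes $z$ into the oracle registers $O^e(0), O^e(1), \dots$, executes one NOO instruction, and inspects $A_e$: if $A_e$ holds the special symbol then $\Sol(z) = \emptyset$ and the answer is ``$q \notin L$'', and otherwise $A_e$ holds an element of $\Sol(z)$, so $\Sol(z) \neq \emptyset$ and the answer is ``$q \in L$''; $M'$ stores this bit for $z$ and resumes the simulation with the corresponding answer. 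Indexing the dictionary by $z$ rather than by $q$ is what guarantees that each instance $f(q)$ receives at most one NOO call over the whole run, which is crucial: a second NOO call on the same instance would return a \emph{different} solution, or the special symbol once the finitely many solutions are exhausted, which would corrupt the emptiness test whenever $f$ fails to be injective.

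It then remains to check the three conditions in the definition of a $D$-reduction. Correctness: the stored bits agree with $L$ on every query (since $q \in L \iff \Sol(f(q)) \neq \emptyset$), so $M'$ faithfully simulates $M$ and hence prints exactly the solutions of $\E''$ on $x$ without duplicates and halts because $M$ does. Delay: between two consecutive outputs $M$ performs at most $p(|x|)$ steps, and $M'$ replaces each such step by polynomially many steps (an ordinary RAM step, or a query resolved by one evaluation of $f$, one dictionary operation, and one NOO counted as a single step); a standard, e.g.\ trie-based, dictionary keeps each such operation polynomial in $|x|$ throughout the run, so $M'$ enumerates $\E''$ with polynomial delay, i.e.\ in $\DelP$. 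Oracle-boundedness: $|f(q)|$ is polynomial in $|q| \leq p(|x|)$, hence in $|x|$. Order-independence: $M'$ only ever checks whether a NOO answer is the special symbol, never which solution was returned, so its behaviour does not depend on the order in which the $\E$-oracle enumerates $\Sol(f(q))$. This gives $\E'' \leq_D \E$, and since $\E''$ was an arbitrary member of $\DelNP$, $\E$ is $\DelNP$-hard via $D$-reductions.

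The main obstacle is the bookkeeping around the interface mismatch rather than any deep idea: a decision oracle for $L$ would be trivial to simulate, but the enumeration oracle for $\E$ only ever reveals ``some fresh solution'' versus ``special symbol'', so the argument has to (i) route each $L$-query through the $\NPtime$-hardness reduction to turn it into an emptiness question about $\Sol$, and (ii) memoise these answers by instance so that repeated NOO calls on a coinciding instance cannot corrupt the emptiness test; verifying that the memoised simulation still respects the polynomial-delay and polynomial-oracle-size requirements of a $D$-reduction is the only part needing genuine care.
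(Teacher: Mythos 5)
The paper states this proposition as an imported result (Creignou et al., Theorem~13) and gives no proof of its own, so there is no in-paper argument to compare against; your write-up is a correct, self-contained reconstruction of the standard proof. You correctly isolate the one genuine subtlety---routing each decision-oracle query $q$ through the \NPtime-hardness reduction $f$ and memoising the emptiness answer per instance $f(q)$, since a second NOO call on the same instance could return the special symbol even when $\Sol(f(q))\neq\emptyset$---and your checks of delay, oracle-boundedness, and order-independence are all sound.
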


Any following result that states $\DelNP$-hardness for an enumeration problem will be with respect to $D$-reductions.

\section{Complexity of Submodel Enumeration}
In this section, we will present our results regarding the submodel enumeration problem with respect to CTL formulas.
\begin{center}
    \begin{tabularx}{\columnwidth}{rX}
        \toprule
        \textbf{Problem:} & $\CTLsubmodelPlain$                                        \\
        \midrule
        \textbf{Input:}   & Kripke model $\M$, $\CTL$ formula $\varphi$                \\
        \textbf{Task:}    & Output all $\M'\subseteq\M$ such that $\M'\models\varphi$? \\
        \bottomrule
    \end{tabularx}
\end{center}
Let $\mathcal O$ be a set of $\CTL$ operators.
Then $\CTLsubmodel{\mathcal O}$ is $\CTLsubmodelPlain$ but only for $\CTL$ formulas using operators from $\mathcal O$ (besides any of the Boolean connectors).
The same applies to the next two auxiliary decision problems.
\begin{center}
    \begin{tabularx}{\columnwidth}{rX}
        \toprule
        \textbf{Problem:}  & \ExistSubmodelplain                                            \\
        \midrule
        \textbf{Input:}    & Kripke model $\M$, $\CTL$ formula $\varphi$                    \\
        \textbf{Question:} & Does $\M' \subseteq \M$ exist such that $\M' \models \varphi$? \\
        \bottomrule
    \end{tabularx}
\end{center}

\begin{center}
    \begin{tabularx}{\columnwidth}{rX}
        \toprule
        \textbf{Problem:}  & \ExtendSubmodelplain                               \\
        \midrule
        \textbf{Input:}    & Kripke model $\M$, $\CTL$ formula $\varphi$,       \\
                           & set of deletions $D$                               \\
        \textbf{Question:} & Does an extension $D' \supseteq D$ exist such that \\
                           & $\M - D' \models \varphi$?                         \\
        \bottomrule
    \end{tabularx}
\end{center}

The first result will show membership in the class $\DelNP$ for the unrestricted version and will make use of the auxiliary problem $\ExtendSubmodelplain$.

\begin{theorem}\label{th:CTL_InDelNP}
    $\CTLsubmodelPlain \in \DelNP$.
\end{theorem}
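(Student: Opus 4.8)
The natural route is to invoke \Cref{prop:ExtendSolInC-EInDelC}: it suffices to show that the decision problem $\ExtendSubmodelplain$ lies in $\NPtime$, since that problem is exactly $\ExtendSolE{\E}$ for $\E = \CTLsubmodelPlain$ once solutions are encoded as sets of deletions $D$ (with the partial solution being a prefix of such an encoding). So the whole theorem reduces to one membership claim about a decision problem.

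\textbf{Step 1: Fix the encoding of solutions.} I would first spell out how a submodel $\M' \subseteq \M$ is written as a word, namely via its set of deletions $D = (D_W, D_R)$ with $r \notin D_W$, listing deleted worlds and deleted edges in some fixed canonical order (say, sorted by index). A partial solution $y$ is then a prefix of such a list, which determines a partial set of deletions $D$; extending $y$ to a full solution $yy'$ corresponds exactly to choosing $D' \supseteq D$ with $\M - D' \models \varphi$ and $\M - D'$ a connected submodel. This makes $\ExtendSolE{\CTLsubmodelPlain}$ and $\ExtendSubmodelplain$ the same problem up to the routine bookkeeping of prefixes, so it remains to show $\ExtendSubmodelplain \in \NPtime$.

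\textbf{Step 2: NP algorithm for $\ExtendSubmodelplain$.} Given $\M$, $\varphi$, and $D$, the machine guesses the remaining deletions $D'' $ (a subset of the at most $|W| + |R|$ worlds and edges not already in $D$), forms $D' = D \cup D''$, and then deterministically checks in polynomial time that: (i) $r \notin D'_W$; (ii) $R' = R \setminus D'_R$ restricted to $W' = W \setminus D'_W$ is still total (every world in $W'$ has an outgoing edge in $R'$); (iii) $\M' = (W', R'\!\upharpoonright_{W'}, \eta\!\upharpoonright_{W'}, r)$ is a connected submodel, i.e. every world of $W'$ is reachable from $r$ in $\M'$ — a linear-time graph reachability check; and (iv) $\M' \models \varphi$, which is CTL model checking, known to be in $\Ptime$. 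All four checks are polynomial, and the guess has polynomial size, so $\ExtendSubmodelplain \in \NPtime$. The size bound on oracle calls required by $\Del{\NPtime}$ is immediate since deletion sets are bounded by $|W| + |R|$.

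\textbf{Step 3: Conclude.} Since $\ExtendSolE{\CTLsubmodelPlain} = \ExtendSubmodelplain \in \NPtime$, \Cref{prop:ExtendSolInC-EInDelC} with $\mathcal C = \NPtime$ gives $\CTLsubmodelPlain \in \DelNP$. I do not expect a genuine obstacle here; the only thing that needs a little care is the interface between the abstract prefix-based definition of $\ExtendSol$ and the concrete deletion-set formulation, and making sure the connectedness constraint is respected during the guess-and-check (so that the enumeration output matches the intended set of connected submodels). Everything else is a direct appeal to $\Ptime$-membership of CTL model checking plus polynomial-time graph checks.
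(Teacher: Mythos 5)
Your proposal is correct and follows essentially the same route as the paper: show $\ExtendSubmodelplain\in\NPtime$ by guessing the additional deletions and verifying $\M-D'\models\varphi$ via polynomial-time $\CTL$ model checking, then invoke \Cref{prop:ExtendSolInC-EInDelC}. Your version is somewhat more careful than the paper's (you make the solution encoding, the totality check, and the connectedness check explicit, which the paper leaves implicit), but there is no substantive difference in the argument.
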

\begin{proof}
    The algorithm deciding $\ExtendSubmodelplain$ works as follows.
    For given Kripke model $\M = (W,R,\eta, r)$, $\CTL$ formula $\varphi$, and set of deletions $D = (D_W, D_R)$,  guess $W' \subseteq W$ and $R' \subseteq R$.
    Afterwards compute $D' \coloneqq (W' \cup D_W, R'\cup D_R)$ and accept if and only if $\M - D' \models \varphi$.

    For correctness, consider that if an extension $D'$ exists such that $\M - D' \models \varphi$, it can be computed by nondeterministically guessing the worlds and relations of that extension. 
    Guessing $W'$ and $R'$, computing $D'$ and checking if $\M - D' \models \varphi$ can all clearly be done in polynomial time (MC is in $\Ptime$ for $\CTL$).
    By \Cref{prop:ExtendSolInC-EInDelC} this is sufficient to prove that $\CTLsubmodelPlain \in \DelNP$.
\end{proof}

\paragraph*{Fragment AG.}
We will show hardness by relating submodels to assignments of propositional formulas, such that a submodel is satisfying, if and only if the corresponding assignment satisfies the given propositional formulas.
Formally this is a reduction from the well-known $\NPtime$-complete problem $\SAT$~\cite{DBLP:conf/stoc/Cook71,lev73}.
\begin{definition}\label{def:M(varphi)}
    Let $\varphi$ be a pro\-po\-si\-tio\-nal formula with $\Prop(\varphi) = \{x_1, x_2, \dots, x_n\}$.
    We define the Kripke model $\M(\varphi) \coloneqq (W, R, \eta, w_0)$ as follows:
    \begin{align*}
        W \coloneqq{}           & \{w_0\} \cup \{w_i^0, w_i^1 \mid 1 \leq i \leq n\}                 \\
        R \coloneqq{}           & \{(w_0, w_1^k) \mid k \in \{0, 1\}\}                               \\
                                & \cup \{(w_{i}^k, w_{i+1}^l) \mid k, l \in \{0, 1\}, 1 \leq i < n\} \\
                                & \cup \{(w_n^k, w_n^k) \mid k \in \{0, 1\}\}                        \\
        \eta(w_i^k) \coloneqq{} & \{x_i, x_i^k\} \text{ for } 1 \leq i \leq n, k \in \{0,1\}
    \end{align*}
\end{definition}

\begin{figure}
    \centering
    \begin{tikzpicture}[fix/.style={},y=1.25cm]
        \node[world,submodel,label={170:\small$w_0$}] at (-.5,0) (s) {};
        \node[world,fix,label={90:\small$w_1^0$}] at (1,.5) (x_1^0) {$x_1,x_1^0$};
        \node[world,fix,submodel,label={90:\small$w_1^1$}] at (1,-.5) (x_1^1) {$x_1,x_1^1$};
        \node[world,fix,submodel,label={90:\small$w_2^0$}] at (3,.5) (x_2^0) {$x_2,x_2^0$};
        \node[world,fix,label={90:\small$w_2^1$}] at (3,-.5) (x_2^1) {$x_2,x_2^1$};

        \draw[Stealth-,submodel] (s.west) -- ++(-.6,0);
        \draw[-Stealth] (s) -- (x_1^0);
        \draw[-Stealth,submodel] (s) -- (x_1^1);
        \draw[-Stealth] (x_1^0) -- (x_2^0);
        \draw[-Stealth] (x_1^0) -- (x_2^1);
        \draw[-Stealth,submodel] (x_1^1) -- (x_2^0);
        \draw[-Stealth] (x_1^1) -- (x_2^1);
        \draw[-Stealth,submodel] (x_2^0) edge[loop right, >=Stealth] (x_2^0);
        \draw[-Stealth] (x_2^1) edge[loop right, >=Stealth] (x_2^1);
    \end{tikzpicture}
    \caption{Kripke model $\M((x_1 \land \lnot x_2) \lor (\lnot x_1 \land x_2))$ and a submodel $\M'$ of $\M$ in bold.}
    \label{fig:ex:M(varphi)}
\end{figure}

\Cref{fig:ex:M(varphi)} depicts such a Kripke model together with one of its submodels.
Notice that the formula $\varphi = (x_1 \land \lnot x_2) \lor (\lnot x_1 \land x_2)$ of $\M$ is satisfied given the assignment $\mathfrak I(x_1) = 1, \mathfrak I(x_2) = 0$, which the submodel $\M'$ ``encodes'' by containing the worlds $w_1^1, w_2^0$ and not $w_1^0, w_2^1$.
The proof of the following theorem uses this connection by constructing formulas that are satisfied in a submodel if and only if the corresponding assignment evaluates to $1$, giving rise to a nice reduction from $\SAT$ to $\ExistSubmodel{\AG}$.

\begin{theorem}\label{th:CTL_A_DelNP-hard}
    $\CTLsubmodel{\AG}$ is $\DelNP$-complete.
\end{theorem}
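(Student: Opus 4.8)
The plan is to establish the two directions of the completeness claim separately. Membership in $\DelNP$ is immediate: $\CTLsubmodel{\AG}$ is a restriction of $\CTLsubmodelPlain$, which lies in $\DelNP$ by \Cref{th:CTL_InDelNP}, and restricting the allowed operators does not change this (the same $\ExtendSubmodelplain$ algorithm works). So the substance is the hardness part, and by \Cref{th:exist_NP-hard_E_DelNP-hard} it suffices to show that $\ExistSubmodel{\AG}$ is $\NPtime$-hard. I would do this by a polynomial-time many-one reduction from $\SAT$, exploiting the gadget $\M(\varphi)$ from \Cref{def:M(varphi)}.

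First I would pin down the combinatorial structure of submodels of $\M(\varphi)$. Because the root $w_0$ cannot be deleted and every connected submodel must keep the transition relation total, in each ``layer'' $i$ (the pair $w_i^0, w_i^1$) at least one world must survive, and one can delete at most one of the two. Thus connected submodels of $\M(\varphi)$ correspond roughly to nonempty choices per layer; I would then argue that the relevant submodels — the ones that can possibly satisfy the formula I build — are exactly those that keep precisely one world per layer, and hence biject with truth assignments $\mathfrak I$ over $\{x_1,\dots,x_n\}$ via $w_i^{\mathfrak I(x_i)}$ surviving. Here the labels $x_i^k$ are the technical device: $x_i^0$ marks ``layer $i$ still contains its $0$-world'', $x_i^1$ marks ``layer $i$ still contains its $1$-world''.

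Next I would translate the propositional formula $\varphi$ into a $\CTL$ formula $\varphi^{\AG}$ using only $\AG$ plus Boolean connectors, such that a one-world-per-layer submodel $\M'$ satisfies $\varphi^{\AG}$ iff the corresponding assignment satisfies $\varphi$. The idea: from $w_0$ one reaches, via $\AG$-quantified statements, information about which worlds remain in each layer. Replace each literal $x_i$ in $\varphi$ by a $\CTL$ formula asserting (at $w_0$, via $\AG$ over the whole structure) that the $1$-world of layer $i$ is present, i.e. something like $\EF x_i^1$ encoded through the $\AG$/Boolean-only restriction — since only $\AG$ is allowed, I would instead express ``layer $i$ keeps its $0$-world'' as $\lnot\AG(x_i \to \lnot x_i^{0})$ evaluated at the root, or more cleanly use the fact that $\AG$ applied to a disjunction pins down which of the two worlds in a layer can be absent. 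I would also add a conjunct to $\varphi^{\AG}$ forcing well-formedness: the submodel must keep at least one world per layer (automatic from totality/connectedness) and — crucially — must not keep both, since a two-world layer would not cleanly encode an assignment; this ``exactly one'' constraint is again expressible with $\AG$ over the layer-marking propositions. Putting these together, $\M(\varphi) $ has a satisfying connected submodel for $\varphi^{\AG}$ iff $\varphi \in \SAT$, and the construction is plainly polynomial.

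The main obstacle I anticipate is the third step: encoding the propositional structure of $\varphi$ and the ``exactly one world per layer'' condition using only $\AG$ and Boolean connectors, with no $\EX$, no $\EF$, no until/release. $\AG$ is a universal, non-branching-detecting operator, so expressing ``world $w_i^1$ is still reachable'' (an existential reachability fact) is delicate; I expect the right move is to let deletions themselves carry the information through the surviving labels and design $\varphi^{\AG}$ so that every conjunct is a global invariant of the form $\AG(\text{Boolean combination of the } x_i, x_i^k)$, reading off the assignment from which labelled worlds persist along the (forced) paths from the root. Verifying the two implications — that a satisfying assignment yields a satisfying submodel, and conversely that any submodel satisfying $\varphi^{\AG}$ must be of the one-world-per-layer form and its induced assignment satisfies $\varphi$ — is the part that needs care but should be routine once the formula is set up correctly. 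I would also double-check that connectedness (Definition of connected submodel) does not inadvertently permit extra submodels that break the bijection.
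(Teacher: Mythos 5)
Your skeleton matches the paper's proof exactly: membership via \Cref{th:CTL_InDelNP}, hardness via \Cref{th:exist_NP-hard_E_DelNP-hard} by reducing $\SAT$ to $\ExistSubmodel{\AG}$ through the gadget $\M(\varphi)$ of \Cref{def:M(varphi)}, with the surviving worlds $w_i^k$ encoding a truth assignment. The gap is that the one step you yourself flag as the main obstacle --- actually writing down the $\AG$-only formula --- is never resolved, and it is the entire content of the reduction. The paper's solution is to take $\varphi$ in negation normal form and substitute each positive literal $x_i$ by $\AG(x_i\to x_i^1)$ and each negative literal $\lnot x_i$ by $\AG(x_i\to x_i^0)$: the statement ``every reachable world labelled $x_i$ also carries $x_i^1$'' is a purely universal invariant, hence directly an $\AG$-formula, and it holds in a submodel precisely when the $0$-world of layer $i$ has been deleted. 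Crucially, this universal encoding makes your proposed ``exactly one world per layer'' conjunct unnecessary: if a submodel keeps both $w_i^0$ and $w_i^1$, then both substituted literals are simply false, and monotonicity of the NNF formula ensures the reverse direction still goes through by reading an assignment off any path.

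Your sketched alternative --- encoding ``$w_i^0$ survives'' existentially as $\lnot\AG(x_i\to\lnot x_i^0)$, i.e.\ $\EF(x_i\wedge x_i^0)$ --- could be completed, but only together with the at-most-one-per-layer constraint you gesture at (expressible as $\bigwedge_i\bigl(\AG(x_i\to\lnot x_i^0)\vee\AG(x_i\to\lnot x_i^1)\bigr)$); without that conjunct the existential encoding is unsound, since for the unsatisfiable $x_1\wedge\lnot x_1$ the full model $\M(x_1\wedge\lnot x_1)$ keeps both worlds of layer $1$ and satisfies both existential literals. Since you neither commit to a literal translation nor verify either direction of the equivalence, the proposal identifies the right source problem, the right gadget, and the right proof architecture, but stops short of the construction that makes the reduction work.
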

\begin{proof}
    The upper bound follows from \Cref{th:CTL_InDelNP}.
    By \Cref{th:exist_NP-hard_E_DelNP-hard}, showing $\NPtime$-hardness of $\ExistSubmodel{\AG}$ implies $\DelNP$-hardness of $\CTLsubmodel{\AG}$.

    Let $\varphi$ be propositional formula in negation normal form.
    $\varphi_{\AG}$ is constructed by substituting $x_i$ with $\AG (x_i \rightarrow x_i^1)$ and $\neg x_i$ with $\AG (x_i \rightarrow x_i^0)$ in $\varphi$ for all $x_i \in \Prop(\varphi)$.
    Note that while we have not formally introduced implication, it can simply be taken as  $\lnot x_i \lor x_i^1$.
    Also recall that atomic negation can always be simulated by introducing new propositions and labeling the model accordingly.

    We now show that $\langle \varphi \rangle \in \SAT$, if and only if we have that $\langle \M(\varphi), \varphi_{\AG} \rangle \in \ExistSubmodel{\AF}$.

    Suppose $\varphi \in \SAT$.
    Then there exists an assignment $\mathfrak I$ such that $\mathfrak I(\varphi) = 1$.
    Using $\mathfrak I$, we construct a submodel $\M' = (W', R', \eta, w_0)$ as follows:
    \begin{align*}
        W' & \coloneqq W \setminus \{w_i^k \mid 1 \leq i \leq n, k = 1-\mathfrak I(x_i)\} \\
        R' & \coloneqq R \cap (W' \times W')
    \end{align*}
    That is, we remove the worlds $w_i^1$, if $\mathfrak I(x_i) = 0$ and $w_i^0$, if $\mathfrak I(x_i) = 1$.

    Observe that $\M' \models (x_i \rightarrow x_i^1)$, if and only if $\mathfrak I(x_i) = 1$, since all worlds of $\M'$ labeled with $x_i$ are also labeled with $x_i^1$.
    Analogously, $\M' \models (x_i \rightarrow x_i^0)$, if and only if $\mathfrak I(x_i) = 0$.
    Because $\varphi_{\AG}$ differs from $\varphi$ only in its atoms, it follows that $\M' \models \varphi_{\AG}$ must be true.

    In the same way, if there is a submodel $\M'$ such that $\M' \models \varphi_{\AG}$, we can construct an assignment $\mathfrak I$ from a path $\pi \in \Pi(\M')$ such that $\mathfrak I(\varphi)$ evaluates to $1$.

    To conclude the reduction, observe that the construction of $\M(\varphi)$ and $\varphi_{\AG}$ can both clearly be done in polynomial time, showing $\SAT \pmRed \ExistSubmodel{\AG}$ and proving $\DelNP$-hardness of $\CTLsubmodel{\AG}$.
\end{proof}
Notice that the reduction requires $\AG$ as operator and only the binary Boolean connectors $\land$, $\lor$ and atomic negations, which can be removed by a simple relabeling.

\paragraph*{Fragment AF.}
We will show hardness via relating submodels to deciding the problem HAMPATH~\cite{DBLP:conf/coco/Karp72}.

\begin{definition}\label{def:M(H)}
    Let $H = \langle G, s, t \rangle$ be a $\HAMPATH$ instance, with $G = (V,E)$ a graph and $s, t \in V$.
    We define the Kripke model $\M(H) \coloneqq (W, R, \eta, w_s)$ as follows
    \begin{align*}
        W \coloneqq         & \ \{w_v|v \in V\} \cup \{\hat w\}                                                \\
        R \coloneqq         & \ \{(w_u, w_v)|(u,v) \in E, u \neq t\}\!\cup\! \{(w_t,\hat w), (\hat w,\hat w)\} \\
        \eta(w_v) \coloneqq & \ \{x_v\}, \text{ for } v \in V
    \end{align*}
\end{definition}
The underlying graph of this model is almost $G$ itself, except that a new world $\hat w$ is added, which became the only successor of $w_t$ and has only one relation to itself.
\Cref{fig:ex:M(H)}~(b) depicts such a model for the graph in \Cref{fig:ex:graphG}~(a).

\begin{figure}
    \centering
    \centering
    \begin{tikzpicture}[scale=1.5,y=1cm]
        \node at (-.5,.25) {(a)};
        \node[world] at (0,.75) (s) {$s$};
        \node[world] at (1,.75) (a) {$a$};
        \node[world] at (0,-.25) (b) {$b$};
        \node[world] at (1,-.25) (t) {$t$};
        \draw[-Stealth] (s) -- (a);
        \draw[-Stealth] (a) to[out = 215, in = 55] (b);
        \draw[-Stealth] (b) to[out = 35, in = 235] (a);
        \draw[-Stealth] (b) -- (t);
        \draw[-Stealth] (s) -- (b);
        \draw[-Stealth] (s) -- (t);
        \draw[-Stealth] (t) -- (a);
    \end{tikzpicture}
    \qquad
    \begin{tikzpicture}[scale=1.5,y=1cm]
        \node at (-.5,.5) {(b)};
        \node[world,label={140:$w_s$}] at (0,1) (s) {$x_s$};
        \node[world,label={40:$w_a$}] at (1,1) (a) {$x_a$};
        \node[world,label={180:$w_b$}] at (0,0) (b) {$x_b$};
        \node[world,label={0:$w_t$}] at (1,0) (t) {$x_t$};
        \node[world,label={0:$\hat w$}] at (1.5,0.5) (end) {};

        \draw[Stealth-] (s.west) -- ++(-0.20,0);
        \draw[-Stealth] (s) -- (a);
        \draw[-Stealth] (a) to[out = 215, in = 55] (b);
        \draw[-Stealth] (b) to[out = 35, in = 235] (a);
        \draw[-Stealth] (b) -- (t);
        \draw[-Stealth] (s) -- (b);
        \draw[-Stealth] (s) -- (t);
        \draw[-Stealth] (t) -- (end);
        \draw[-Stealth] (end) edge[loop above, looseness=8] (end);
    \end{tikzpicture}

    \caption{(a) $G$ with Hamiltonian path $s,a,b,t$. (b) Kripke model $\M(H)$ of $H = \langle G, s, t \rangle$.}\label{fig:ex:graphG}\label{fig:ex:M(H)}
\end{figure}
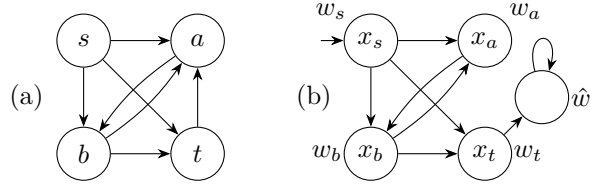

\begin{theorem}\label{th:CTL_AF_land}
    $\CTLsubmodel{\AF}$ is $\DelNP$-complete.
\end{theorem}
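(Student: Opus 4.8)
The plan is straightforward: the upper bound is immediate from \Cref{th:CTL_InDelNP}, and for the lower bound I would invoke \Cref{th:exist_NP-hard_E_DelNP-hard}, so it suffices to show that $\ExistSubmodel{\AF}$ is $\NPtime$-hard. I would reduce from $\HAMPATH$: given $H = \langle G, s, t\rangle$ with $G = (V,E)$, output the model $\M(H)$ from \Cref{def:M(H)} together with the formula $\varphi_H \coloneqq \bigwedge_{v \in V} \AF x_v$, which uses only the operator $\AF$ and the connector $\land$. First I would record a harmless preprocessing step: we may assume $s \neq t$ and that every vertex other than $t$ has out-degree at least $1$, since otherwise there is no Hamiltonian $s$--$t$ path at all and we map to a fixed negative instance; this also guarantees that $\M(H)$ is a legitimate (i.e.\ total) Kripke model. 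Since $\M(H)$ and $\varphi_H$ are clearly polynomial-time computable, it then remains to prove that $H \in \HAMPATH$ if and only if $\langle \M(H), \varphi_H\rangle \in \ExistSubmodel{\AF}$.

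For the ``only if'' direction I would, given a Hamiltonian path $s = v_1, \dots, v_n = t$ with $n=|V|$, take $\M'$ with world set $\{w_{v_1},\dots,w_{v_n},\hat w\}$ and transition set $\{(w_{v_i},w_{v_{i+1}}) \mid 1 \le i < n\} \cup \{(w_{v_n},\hat w),(\hat w,\hat w)\}$. Each of these transitions survives in $\M(H)$ because $v_i \neq t$ for $i<n$; the relation is total, and every world is reachable from $w_s$, so $\M'$ is a connected submodel. Its only infinite path from $w_s$ is $w_{v_1},\dots,w_{v_n},\hat w,\hat w,\dots$, which visits $w_v$ for every $v\in V$ since the path is Hamiltonian, so $\M' \models \AF x_v$ for all $v$ and hence $\M' \models \varphi_H$.

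The core of the argument is the ``if'' direction. Given a connected submodel $\M' \subseteq \M(H)$ with $\M' \models \varphi_H$, I would argue: (i) each conjunct $\AF x_v$ forces $w_v \in W'$, and since the only outgoing edge of $w_t$ in $\M(H)$ leads to $\hat w$, totality of $R'$ forces $(w_t,\hat w)\in R'$ and then $(\hat w,\hat w)\in R'$, so $\hat w\in W'$; (ii) $\M'$ has no cycle through any $w_v$ with $v\in V$, because such a cycle could pass through neither $w_t$ (whose sole successor $\hat w$ never returns) nor $\hat w$, and hence would give an infinite path from $w_s$ that forever avoids $w_t$, contradicting $\M',w_s \models \AF x_t$; (iii) consequently the restriction of $\M'$ to $\{w_v \mid v\in V\}$ is a finite acyclic graph whose only possible sink is $w_t$ (the only $V$-world with an edge to the unique non-$V$ world $\hat w$), so every infinite path from $w_s$ first traces a \emph{simple} $s$--$t$ path of $G$ and then loops at $\hat w$; (iv) since $\M' \models \AF x_v$ for every $v$, that simple path visits all of $V$ and is therefore a Hamiltonian $s$--$t$ path, and at least one such infinite path exists because $R'$ is total.

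The main obstacle is exactly steps (ii)--(iv): a priori a satisfying submodel might realise only a covering \emph{walk}, not a Hamiltonian \emph{path}. The key realisations are that totality at $w_t$ pins $\hat w$ down as an inescapable sink reachable only via $w_t$, and that any cycle among the $V$-worlds would create a path evading $w_t$ and thus falsify the conjunct $\AF x_t$; once acyclicity of the $V$-part is established, every infinite path is simple on $V$, and the $\AF$-conjuncts upgrade ``covers all vertices'' to ``Hamiltonian path''. Putting it together, $\HAMPATH \pmRed \ExistSubmodel{\AF}$ gives $\NPtime$-hardness of $\ExistSubmodel{\AF}$, and \Cref{th:exist_NP-hard_E_DelNP-hard} then yields $\DelNP$-hardness of $\CTLsubmodel{\AF}$, which together with \Cref{th:CTL_InDelNP} gives $\DelNP$-completeness.
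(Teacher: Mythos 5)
Your proposal is correct and follows essentially the same route as the paper: the identical reduction from \HAMPATH{} via $\M(H)$ and $\varphi = \bigwedge_{v\in V}\AF x_v$, with the same key observations that $\AF x_t$ forces acyclicity of the $V$-part and the remaining conjuncts force full coverage, yielding a Hamiltonian path. You are somewhat more careful than the paper in spelling out the forward direction and the preprocessing needed to keep $R$ total, but these are refinements of the same argument rather than a different approach.
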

\begin{proof}
    The upper bound follows directly from \Cref{th:CTL_InDelNP}.

    Let $H = \langle G, s, t \rangle$ be an instance of $\HAMPATH$ with $G = (V, E), s, t\in V$ and $n = |V|$.
    Further let $\M(H)$ be the Kripke model obtained from $H$ as described in \Cref{def:M(H)}.
    Now,  construct the formula
    $
        \varphi \coloneqq \bigwedge_{v\in V} \AF x_v.
    $

    Notice that a submodel $\M' \subseteq \M$ satisfies $\varphi$ only if it is acyclic.
    That is because all paths have to contain a world labeled with $t$, which only holds at $w_t$.
    The world $w_t$ has a single outgoing edge to $\hat w$, where all paths ``end'' in an infinite loop, making other cycles impossible.

    Next, we have that paths must contain worlds where $v$ for $v \in V$ holds.
    This can only be achieved if all path contain the worlds $w_v$ for $v \in V$.
    It follows that satisfying submodels must contain each world at least once, but because of acyclicity they can also only contain each world at most once.
    Thus satisfying submodels must be single path from $s$ to $t$ containing each world exactly once, i.e., they must be Hamiltonian.
    Construction of $\M(H)$ and $\varphi$ is in $\Ptime$.
\end{proof}
Notice that the reduction requires $\AF$ as operator and the Boolean connector $\land$.

\paragraph*{Fragment AX.}
We again use $\HAMPATH$ to show hardness.
By concatenating the $\AX$ operator $n-1$ times followed by $x_t$, we enforce that submodels must satisfy $x_t$ on all path at position $n$.
Considering the construction of $\M(H)$ this is only possible, if paths are acyclic and contain $w_t$ only at position $n$.
This implies that all satisfying submodels describe Hamiltonian paths from $s$ to $t$.

\begin{theorem}\label{th:DelNP:AX}
    $\CTLsubmodel{\AX}$ is $\DelNP$-complete.
\end{theorem}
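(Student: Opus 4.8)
The plan is to mirror the structure of the two preceding theorems. Membership $\CTLsubmodel{\AX}\in\DelNP$ is inherited verbatim from \Cref{th:CTL_InDelNP}, so the only work is the lower bound, and by \Cref{th:exist_NP-hard_E_DelNP-hard} it suffices to prove that $\ExistSubmodel{\AX}$ is $\NPtime$-hard. I would do this by a polynomial-time reduction from $\HAMPATH$, reusing the Kripke model $\M(H)$ from \Cref{def:M(H)} and taking $\varphi$ to be the $\AX$ operator iterated $n-1$ times in front of $x_t$, where $n=|V|$. Both $\M(H)$ and $\varphi$ are clearly computable in polynomial time (note $\varphi$ has size $O(n)$), so the real content is the equivalence ``$H\in\HAMPATH$ if and only if $\M(H)$ has a submodel satisfying $\varphi$''.

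For the easy direction I would turn a Hamiltonian path $s=u_1,\dots,u_n=t$ into the submodel whose worlds are $w_{u_1},\dots,w_{u_n},\hat w$ and whose edges are the path edges $(w_{u_i},w_{u_{i+1}})$ for $1\le i<n$ together with $(w_t,\hat w)$ and $(\hat w,\hat w)$. One checks this is a legal submodel (every retained edge is an edge of $\M(H)$, since $u_i=t$ only for $i=n$), that it is total and connected, and that its unique infinite path from $w_s$ has the only $x_t$-labelled world $w_t$ exactly at position $n$; hence it satisfies $\AX^{\,n-1}x_t$.

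The converse is where the crux lies. Since $R'$ must remain total, the edges $(w_t,\hat w)$ and $(\hat w,\hat w)$ are forced to survive (and $w_t$ itself must be present, else no world carries $x_t$). Satisfaction of $\varphi$ then means $\pi[n]$ is $x_t$-labelled for every path $\pi$ from $w_s$, and as $w_t$ is the unique such world, every path visits $w_t$ exactly at step $n$. The structural point I expect to be the main obstacle is showing that the length-$n$ prefix of each path is a simple path exhausting $\{w_v\mid v\in V\}$: it cannot contain $\hat w$ nor reach $w_t$ before step $n$, because from $w_t$ one can only reach $\hat w$ and neither $\hat w$ nor $w_t$-after-leaving is $x_t$-labelled; and it cannot repeat a world, since splicing the tail starting at a second occurrence onto the head ending at the first occurrence would produce a path whose $n$-th world is $\hat w$ rather than $w_t$, contradicting $\varphi$. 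Because $|V|=n$, those $n$ distinct worlds are exactly the $w_v$; consecutive ones are joined by surviving edges, each of which is an edge of $G$ by construction of $\M(H)$; and the first is $w_s$, the last $w_t$. Reading off the indices yields a Hamiltonian path in $G$.

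Combining the two directions gives $\HAMPATH\pmRed\ExistSubmodel{\AX}$; with \Cref{th:exist_NP-hard_E_DelNP-hard} for $\DelNP$-hardness and \Cref{th:CTL_InDelNP} for membership, $\CTLsubmodel{\AX}$ is $\DelNP$-complete. Apart from routine bookkeeping (legality, totality, and connectedness of the constructed submodel, and the corner case $n=1$), the only genuinely delicate step is the simplicity-of-the-prefix argument, and it is precisely the sink gadget $\hat w$ together with the totality constraint — which keeps that gadget and the edge $w_t\to\hat w$ in play — that makes it go through, exactly as this gadget was designed for.
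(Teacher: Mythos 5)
Your proposal is correct and follows essentially the same route as the paper: membership from \Cref{th:CTL_InDelNP}, and $\NPtime$-hardness of $\ExistSubmodel{\AX}$ via the same reduction from $\HAMPATH$ using $\M(H)$ and $\varphi=\AX^{n-1}x_t$, with the same key observations that every path must have $w_t$ exactly at position $n$ and that the prefix must therefore be simple (your splicing argument is just a repackaging of the paper's no-cycle argument). You are in fact slightly more complete than the paper, which leaves the forward direction (Hamiltonian path yields a satisfying submodel) implicit.
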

\begin{proof}
    The upper bound follows from \Cref{th:CTL_InDelNP}.

    Suppose $H$ is an instance of $\HAMPATH$ and $n = |V|$.
    Then let $\M(H)$ be the Kripke model as defined in \Cref{def:M(H)} and let
    $
        \varphi \coloneqq {\AX}^{n-1}~x_t
    $
    be a formula, where ${\AX}^{n-1}$ denotes the $n-1$-times concatenation of the $\AX$ operator.
    Furthermore, let $\M' \subseteq \M(H)$ be a satisfying submodel.

    First, show that $\pi[n] = w_t$ for all paths $\pi \in \Pi(\M')$.
    \begin{align*}
         & \M', w_s \models {\AX}^{n-1}~x_t                                                                                \\
         & \Leftrightarrow \forall \pi \in \Pi(w_s) : \M', \pi[2] \models {\AX}^{n-2}~x_t
        \tag{Def.}                   \\
         & \Leftrightarrow \forall \pi \in \Pi(w_s) \forall \sigma \in \Pi(\pi[2]): \M', \sigma[2] \models {\AX}^{n-3}~x_t \\
         & \Leftrightarrow \forall \pi \in \Pi(w_s) : \M', \pi[3] \models {\AX}^{n-3}~x_t
        \tag{prefix}                 \\
         & \Leftrightarrow \forall \pi \in \Pi(w_s) : \M', \pi[n-1] \models \AX x_t
        \tag{repeat}                 \\
         & \Leftrightarrow \forall \pi \in \Pi(w_s) : \M', \pi[n] \models x_t
    \end{align*}
    By the definition of $\M(H)$, only $\eta(w_t) = x_t$.
    Thus $\forall \pi \in \Pi(w_s)$ we have $\pi[n] = w_t$.

    Note that $w_t$ cannot be on any path before that.
    Otherwise the path could only continue to $\hat w$ and ``end'' there.
    Also, submodels again cannot have cycles, otherwise there would be a path that never reaches $w_t$.
    So we can conclude that on all paths in $\M'$ the first $n$ elements must be different.
    With $n$ worlds other than $\hat w$, this leads to satisfying submodels that are Hamiltonian paths from $w_s$ to $w_t$, showing correctness of the reduction.
    The reduction can be computed in $\Ptime$.
\end{proof}
Notice that the reduction merely requires $\AX$ as operator and no Boolean connectors are used.

\begin{figure}
    \centering
        \begin{tikzpicture}
            [
                scale=1,y=1cm
            ]
            \node[world,label={170:$w_s$},submodel] at (-2,3.5) (s) {};
            \node[world,smallWorld] at (-3.10,2.25) (s1) {$x_1$};
            \node[world,smallWorld] at (-2.35,2.25) (s2) {$x_2$};
            \node[world,smallWorld] at (-1.65,2.25) (s3) {$x_3$};
            \node[world,smallWorld,submodel] at (-0.90,2.25) (s4) {$x_4$};
            \node[world,label={170:$\hat{w}_s$},submodel] at (-2,1) (s-h) {};

            \begin{scope}[xshift=-.9cm]
                \node[world,label={ 10:$w_a$},submodel] at (2,3.5) (a) {};
                \node[world,smallWorld] at (0.90,2.25) (a1) {$x_1$};
                \node[world,smallWorld] at (1.65,2.25) (a2) {$x_2$};
                \node[world,smallWorld,submodel] at (2.35,2.25) (a3) {$x_3$};
                \node[world,smallWorld] at (3.10,2.25) (a4) {$x_4$};
                \node[world,label={ 10:$\hat{w}_a$},submodel] at (2,1) (a-h) {};
            \end{scope}

            \begin{scope}[yshift=.6cm]
                \node[world,label={170:$w_b$},submodel] at (-2,-1) (b) {};
                \node[world,smallWorld] at (-3.10,-2.25) (b1) {$x_1$};
                \node[world,smallWorld,submodel] at (-2.35,-2.25) (b2) {$x_2$};
                \node[world,smallWorld] at (-1.65,-2.25) (b3) {$x_3$};
                \node[world,smallWorld] at (-0.90,-2.25) (b4) {$x_4$};
                \node[world,label={170:$\hat{w}_b$},submodel] at (-2,-3.5) (b-h) {};

                \begin{scope}[xshift=-.9cm]
                    \node[world,label={ 10:$w_t$},submodel] at (2,-1) (t) {};
                    \node[world,smallWorld,submodel] at (0.90,-2.25) (t1) {$x_1$};
                    \node[world,smallWorld] at (1.65,-2.25) (t2) {$x_2$};
                    \node[world,smallWorld] at (2.35,-2.25) (t3) {$x_3$};
                    \node[world,smallWorld] at (3.10,-2.25) (t4) {$x_4$};
                    \node[world,label={ 10:$\hat{w}_t$},submodel] at (2,-3.5) (t-h) {$x_t$};
                \end{scope}
            \end{scope}

            \draw[-Stealth] (s) -- (s1);
            \draw[-Stealth] (s1) -- (s-h);
            \draw[-Stealth] (s) -- (s2);
            \draw[-Stealth] (s2) -- (s-h);
            \draw[-Stealth] (s) -- (s3);
            \draw[-Stealth] (s3) -- (s-h);
            \draw[-Stealth,submodel] (s) -- (s4);
            \draw[-Stealth,submodel] (s4) -- (s-h);

            \draw[-Stealth] (a) -- (a1);
            \draw[-Stealth] (a1) -- (a-h);
            \draw[-Stealth] (a) -- (a2);
            \draw[-Stealth] (a2) -- (a-h);
            \draw[-Stealth,submodel] (a) -- (a3);
            \draw[-Stealth,submodel] (a3) -- (a-h);
            \draw[-Stealth] (a) -- (a4);
            \draw[-Stealth] (a4) -- (a-h);

            \draw[-Stealth] (b) -- (b1);
            \draw[-Stealth] (b1) -- (b-h);
            \draw[-Stealth,submodel] (b) -- (b2);
            \draw[-Stealth,submodel] (b2) -- (b-h);
            \draw[-Stealth] (b) -- (b3);
            \draw[-Stealth] (b3) -- (b-h);
            \draw[-Stealth] (b) -- (b4);
            \draw[-Stealth] (b4) -- (b-h);

            \draw[-Stealth,submodel] (t) -- (t1);
            \draw[-Stealth,submodel] (t1) -- (t-h);
            \draw[-Stealth] (t) -- (t2);
            \draw[-Stealth] (t2) -- (t-h);
            \draw[-Stealth] (t) -- (t3);
            \draw[-Stealth] (t3) -- (t-h);
            \draw[-Stealth] (t) -- (t4);
            \draw[-Stealth] (t4) -- (t-h);

            \draw[Stealth-,submodel] (s.west) -- ++(-0.5,0);
            \draw[-Stealth] (s-h) -- (b);
            \draw[-Stealth,submodel] (s-h) to[out = 0, in = 180] (a);
            \draw[-Stealth] (s-h) -- (t);
            \draw[-Stealth,submodel] (a-h) -- (b);
            \draw[-Stealth] (b-h) to[out = 0, in = 180, looseness=0.9] (a);
            \draw[-Stealth,submodel] (b-h) to[out = 0, in = 180] (t);
            \draw[-Stealth,submodel] (t-h) edge[loop left, >=Stealth] (t-h);
        \end{tikzpicture}
        \quad
        \begin{tikzpicture}
            [
                scale=1,
            ]
            \begin{scope}[yshift=-0.6cm]
                \node[world,label={160:$w_s$}, align=center, submodel] at (-2,5) (s) {$z,x_1,$\\$x_2,x_3,x_4$};
                \node[world,smallWorld] at (-3.10,3.75) (s1) {$x_1$};
                \node[world,smallWorld] at (-2.35,3.75) (s2) {$x_2$};
                \node[world,smallWorld] at (-1.65,3.75) (s3) {$x_3$};
                \node[world,smallWorld, submodel] at (-0.90,3.75) (s4) {$x_4$};
                \node[world,label={160:$\tilde{w}_s$}, align=center, submodel] at (-2,2.5) (s-t) {$y,x_1,$\\$x_2,x_3,x_4$};
                \node[world,label={160:$\hat{w}_s$}, submodel] at (-2,1) (s-h) {$y, z$};

                \begin{scope}[xshift=-.9cm]
                    \node[world,label={ 20:$w_a$}, align=center, submodel] at (2,5) (a) {$z,x_1,$\\$x_2,x_3,x_4$};
                    \node[world,smallWorld] at (0.90,3.75) (a1) {$x_1$};
                    \node[world,smallWorld] at (1.65,3.75) (a2) {$x_2$};
                    \node[world,smallWorld, submodel] at (2.35,3.75) (a3) {$x_3$};
                    \node[world,smallWorld] at (3.10,3.75) (a4) {$x_4$};
                    \node[world,label={ 20:$\tilde{w}_a$}, align=center, submodel] at (2,2.5) (a-t) {$y,x_1,$\\$x_2,x_3,x_4$};
                    \node[world,label={ 20:$\hat{w}_a$}, submodel] at (2,1) (a-h) {$y, z$};
                \end{scope}
            \end{scope}

            \node[world,label={160:$w_b$}, align=center, submodel] at (-2,-1) (b) {$z,x_1,$\\$x_2,x_3,x_4$};
            \node[world,smallWorld] at (-3.10,-2.25) (b1) {$x_1$};
            \node[world,smallWorld, submodel] at (-2.35,-2.25) (b2) {$x_2$};
            \node[world,smallWorld] at (-1.65,-2.25) (b3) {$x_3$};
            \node[world,smallWorld] at (-0.90,-2.25) (b4) {$x_4$};
            \node[world,label={160:$\tilde{w}_b$}, align=center, submodel] at (-2,-3.5) (b-t) {$y,x_1,$\\$x_2,x_3,x_4$};
            \node[world,label={160:$\hat{w}_b$}, submodel] at (-2,-5) (b-h) {$y, z$};

            \begin{scope}[xshift=-.9cm]
                \node[world,label={ 20:$w_t$}, align=center, submodel] at (2,-1) (t) {$z,x_1,$\\$x_2,x_3,x_4$};
                \node[world,smallWorld, submodel] at (0.90,-2.25) (t1) {$x_1$};
                \node[world,smallWorld] at (1.65,-2.25) (t2) {$x_2$};
                \node[world,smallWorld] at (2.35,-2.25) (t3) {$x_3$};
                \node[world,smallWorld] at (3.10,-2.25) (t4) {$x_4$};
                \node[world,label={ 20:$\tilde{w}_t$}, align=center, submodel] at (2,-3.5) (t-t) {$y,x_1,$\\$x_2,x_3,x_4$};
                \node[world,label={ 20:$\hat{w}_t$}, submodel] at (2,-5) (t-h) {$y, z$};
            \end{scope}
            \foreach \node/\b in {s/4,a/3,b/2,t/1}{
                    \draw[-Stealth, submodel] (\node-t) -- (\node-h);
                    \foreach \num in {1,2,3,4}{
                            \ifnum \num=\b
                                \draw[-Stealth, submodel] (\node) -- (\node\num);
                                \draw[-Stealth, submodel] (\node\num) -- (\node-t);
                            \else
                                \draw[-Stealth] (\node) -- (\node\num);
                                \draw[-Stealth] (\node\num) -- (\node-t);
                            \fi
                        }
                }

            \draw[Stealth-, submodel] (s.west) -- ++(-0.5,0);
            \draw[-Stealth] (s-h) -- (b);
            \draw[-Stealth, submodel] (s-h) to[out = 0, in = 208] (a);
            \draw[-Stealth] (s-h) -- (t);
            \draw[-Stealth, submodel] (a-h) -- (b);
            \draw[-Stealth] (b-h) to[out = 0, in = 208, looseness=0.66] (a);
            \draw[-Stealth, submodel] (b-h) to[out = 0, in = 208] (t);
            \draw[-Stealth, submodel] (t-h) edge[loop left,>=Stealth] (t-h);
        \end{tikzpicture}
    \caption{Kripke model (left) of $\M_{\AU}(H)$ and (right) of $\M_{\AR}(H)$ for the graph in \Cref{fig:ex:graphG} (a). The highlighted worlds and relations form a submodel that induces a Hamiltonian path for the instance $H = \langle G, s, t \rangle$ and satisfy (left) $\varphi_4 = ((((\top \AU x_t) \AU x_1) \AU x_2) \AU x_3) \AU x_4$ and (right) $\varphi_4 = (\cdots(\top \AR z) \AR y) \AR x_1)\AR z) \AR y) \AR x_2) \AR z)\AR y)\AR x_3)\AR z) \AR y)\AR x_4$.}
    \label{fig:ex:M_AU(H)}    \label{fig:ex:M_AR(H)}
\end{figure}
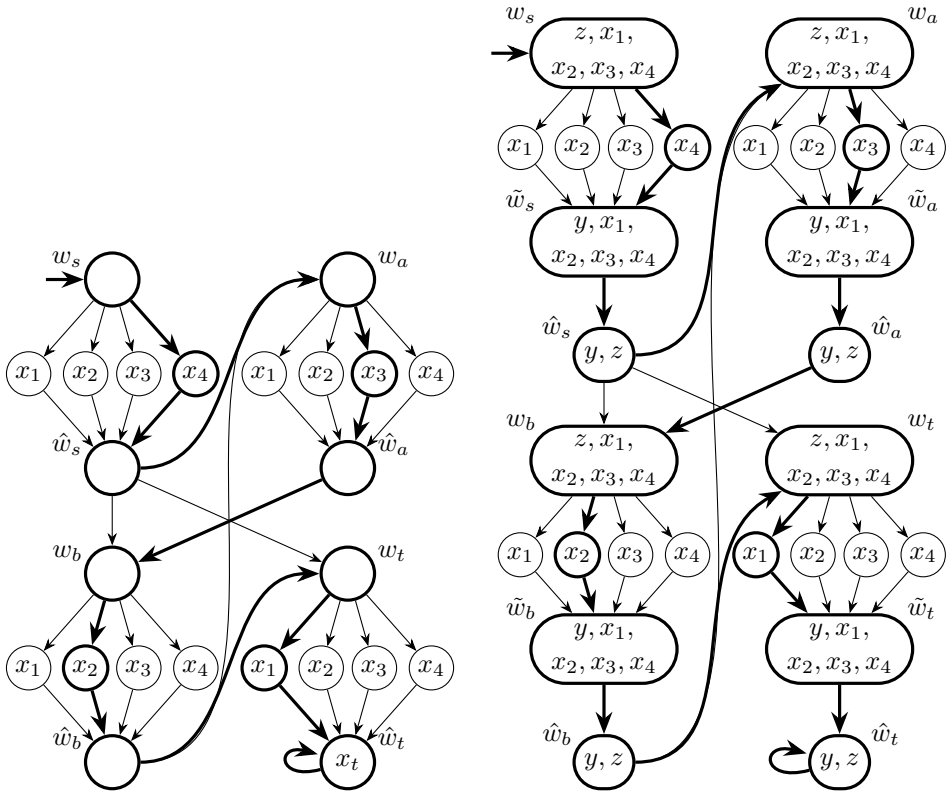

\paragraph*{Fragment AU.}
We continue to use $\HAMPATH$.
For fragment $\AU$, we construct a new submodel $\M_{\AU}(H)$ that expands each node into a ``diamond'' construct with a world for the incoming relations and a world for the outgoing relations, as well as a number of intermediate worlds equal to the total number of vertices in $G$.
We then construct an $\AU$ formula such that all paths in a satisfying submodel are acyclic and contain a different intermediate world at each ``diamond'', thereby describing a Hamiltonian path of $G$.
\begin{definition}\label{def:M_AU(H)}
    Let $G = (V,E)$ be a graph, $s, t \in V$, $n = |V|$, and $H= \langle G, s, t \rangle$ be an instance of $\HAMPATH$.
    We define the model $\M_{\AU}(H) \coloneqq (W, R, \eta, w_s)$ as follows:
    \begin{align*}
        W \coloneqq             & \ \{w_v,\hat{w}_v, w_{v,i} \mid v \in V, 1 \leq i \leq n\}                        \\
        R \coloneqq             & \ \{(w_v,w_{v,i}), (w_{v,i}, \hat{w}_v) \mid v \in V, 1 \leq i \leq n\}           \\
                                & \cup\{(\hat{w}_u, w_v) \mid (u,v) \in E, u \neq t\}\cup \{(\hat{w}_t,\hat{w}_t)\} \\
        \eta(w_{v,i}) \coloneqq & \ \{x_i\} \text{ for } 1 \leq i \leq n,\quad\eta(\hat{w}_t) \coloneqq \{x_t\}
    \end{align*}
\end{definition}
\Cref{fig:ex:M_AU(H)} depicts the submodel $\M_{\AU}(H)$ constructed from the graph in \Cref{fig:ex:graphG} (a), with $H = \langle G, s, t \rangle$.

\begin{theorem}\label{th:DelNP:AU}
    $\CTLsubmodel{\AU}$ is $\DelNP$-complete.
\end{theorem}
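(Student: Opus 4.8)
The plan is to follow the pattern of the $\AG$, $\AF$, and $\AX$ cases. Membership in $\DelNP$ is inherited from \Cref{th:CTL_InDelNP}. For the lower bound, by \Cref{th:exist_NP-hard_E_DelNP-hard} it suffices to show that $\ExistSubmodel{\AU}$ is $\NPtime$-hard, which I would establish by a polynomial reduction from $\HAMPATH$. Given $H = \langle G, s, t \rangle$ with $G = (V,E)$ and $n = |V|$, take the model $\M_{\AU}(H)$ of \Cref{def:M_AU(H)} together with the nested formula $\varphi \coloneqq (\cdots((\top \AU x_t) \AU x_1) \AU x_2 \cdots \AU x_{n-1}) \AU x_n$, i.e.\ the formula $\varphi_n$ depicted in \Cref{fig:ex:M_AU(H)}. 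Writing $\psi^{(0)} \coloneqq \top$, $\psi^{(1)} \coloneqq \psi^{(0)} \AU x_t$ (equivalent to $\AF x_t$ by \Cref{ob:CTL_equis}), and $\psi^{(j+1)} \coloneqq \psi^{(j)} \AU x_j$ for $1 \le j \le n$, one has $\varphi = \psi^{(n+1)}$; the claim is $H \in \HAMPATH$ iff $\langle \M_{\AU}(H), \varphi \rangle \in \ExistSubmodel{\AU}$.

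For the first implication I would extract a Hamiltonian path from an arbitrary satisfying submodel $\M'$. Since $w_s$ carries no proposition, $\M', w_s \models \psi^{(j+1)}$ forces $\M', w_s \models \psi^{(j)}$ (on every path the earliest occurrence of $x_j$ lies strictly after $w_s$, so the ``until'' guard $\psi^{(j)}$ must already hold at $w_s$); hence $\M'$ satisfies all of $\psi^{(1)}, \dots, \psi^{(n+1)}$ at $w_s$. From $\psi^{(1)} \equiv \AF x_t$, together with the facts that $x_t$ labels only $\hat{w}_t$ and that the self-loop is the sole outgoing edge of $\hat{w}_t$, it follows that $\M'$ is acyclic apart from that loop and that every path from $w_s$ runs through $w_t \to w_{t,i} \to \hat{w}_t$ and then loops forever. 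Moreover, for each $1 \le j \le n$, $\M', w_s \models \psi^{(j+1)}$ forces every path from $w_s$ to visit a world labelled $x_j$, necessarily an intermediate world $w_{v,j}$; so every path meets intermediate worlds carrying all of $x_1, \dots, x_n$. Since by acyclicity a path enters each ``diamond'' at most once and meets exactly one intermediate world there, every path traverses all $n$ diamonds exactly once, in some vertex order $v_1 = s, v_2, \dots, v_n = t$, and the edges $(\hat{w}_{v_j}, w_{v_{j+1}}) \in R$ exist only when $(v_j, v_{j+1}) \in E$; hence $v_1 \cdots v_n$ is a Hamiltonian path from $s$ to $t$ in $G$.

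For the converse, from a Hamiltonian path $s = v_1, \dots, v_n = t$ I would build the submodel $\M'$ keeping exactly the worlds $w_{v_j}$, $w_{v_j, n+1-j}$, $\hat{w}_{v_j}$ and the edges $w_{v_j} \to w_{v_j, n+1-j} \to \hat{w}_{v_j}$, $\hat{w}_{v_j} \to w_{v_{j+1}}$ for $j < n$, and $\hat{w}_t \to \hat{w}_t$. This is a connected submodel of $\M_{\AU}(H)$ (totality is witnessed by each world's forced successor together with the $\hat{w}_t$-loop), and it is a single deterministic path whose sequence of propositional labels is $x_n, x_{n-1}, \dots, x_1$ followed by $x_t$ forever. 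A routine induction on $j$ then shows $\M', w_s \models \psi^{(j+1)}$ for all $0 \le j \le n$, using that on this path the special labels are read in the order $x_n, x_{n-1}, \dots, x_1$ with a three-position gap, which leaves each nested ``until'' enough room to be discharged while its guard is still true; in particular $\varphi = \psi^{(n+1)}$ holds at $w_s$. Since $\M_{\AU}(H)$ has $O(n^2)$ worlds and $\varphi$ has $n+1$ operators, both directions are computable in $\Ptime$, so $\HAMPATH \pmRed \ExistSubmodel{\AU}$, which gives $\DelNP$-completeness of $\CTLsubmodel{\AU}$.

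The step I expect to need the most care is the arithmetic of the nested ``until'' windows in the converse direction: the retained intermediate worlds must be labelled exactly right---the $j$-th diamond on the path gets $x_{n+1-j}$, so the path reads $x_n$ down to $x_1$---so that, reading $\varphi$ from the outside in, every inner ``until'' is discharged before the guard of the formula enclosing it first fails; this is precisely why each diamond is padded with $n$ intermediate worlds (making all $n$ labels forceable) and why three worlds per diamond are used (giving the ``until'' windows enough slack). On the hardness side, the one subtlety is that the \emph{universal} path quantifier together with acyclicity must be argued to force \emph{every} path of a satisfying submodel---not just one---to realise a Hamiltonian traversal. I would also sanity-check the degenerate case $n = 1$ (where $s = t$), and recall, as noted after the definition of connected submodel, that restricting attention to connected submodels is without loss of generality here.
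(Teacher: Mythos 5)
Your proposal is correct and follows essentially the same route as the paper: the same reduction from $\HAMPATH$ via $\M_{\AU}(H)$ and the nested formula $\varphi_n$, with membership from \Cref{th:CTL_InDelNP} and the forward direction argued by deriving $\AF x_t$ (hence acyclicity) and then forcing every path to collect all labels $x_1,\dots,x_n$. Your write-up is in fact slightly tighter in the forward direction (extracting ``every $\psi^{(j)}$ holds at $w_s$'' directly from the unlabelled root rather than unrolling witness indices $k^{(i)}$ as the paper does) and more explicit in the converse direction, which the paper leaves largely implicit.
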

\begin{proof}
    The upper bound follows directly from \Cref{th:CTL_InDelNP}.

    Let $H = \langle G, s, t \rangle$ be an instance of $\HAMPATH$ and $\varphi_n$ be the formula of interest here, for
    $$
        \varphi_i \coloneqq \begin{cases}
            \varphi_{i-1} \AU x_{i} & \text{if } i > 0, \\
            \top \AU x_t            & \text{if } i = 0.
        \end{cases}
    $$
    We show that any submodel $\M_{\AU}'(H) \subseteq \M_{\AU}(H)$ that satisfies $\varphi$ must consist of a single infinite path, which contains all $w_v$ for $v \in V$ exactly once, starts in $w_s$ and ends with an infinite sequence of $\hat{w}_t$.
    Thereby showing $H \in \HAMPATH$.

    Since the root remains unchanged by our definition of submodels, all paths in any $\M_{\AU}'$ begin with $w_s$.
    While it should be obvious that if a path contains $w_t$ it ``ends'' there.
    Contrary it might not be immediately clear, that by our definition of $\varphi_n$ all path must contain $w_t$ at some point.
    \begin{align*}
         & \M_{\AU}'(H), w_s \models \varphi_n ( = \varphi_{n-1} \AU x_n)                                                   \\
         & \Rightarrow \forall \pi \in \Pi(w_s) \exists k \geq 1 \forall i < k : \M'_{\AU}(H), \pi[i] \models \varphi_{n-1}
        \tag{Right hand side of \AU}  \\
         & \Rightarrow \forall \pi \in \Pi(w_s) : \M'_{\AU}(H), \pi[1] \models \varphi_{n-2} \AU x_{n-1}
        \tag{take $i = 1$}            \\
         & \Rightarrow \M'_{\AU}(H), w_s \models \varphi_{n-2} \AU x_{n-1}
        \tag{$\pi[1] = w_s$}          \\
         & \makebox[\widthof{$\ \Rightarrow$}][c]{$\vdots$} \tag{repeat}                                                    \\
         & \Rightarrow \M'_{\AU}(H), w_s \models \varphi_0                                                                  \\
         & \Rightarrow \M'_{\AU}(H), w_s \models \top \AU x_t                                                               \\
         & \Rightarrow \M'_{\AU}(H), w_s \models \AF x_t
        \tag{\Cref{ob:CTL_equis}}
    \end{align*}
    From this we conclude that to satisfy $\AF x_t$ all paths in $\M_{\AU}'$ contain a world labeled $x_t$ at some point.
    By the construction of $\M_{AU}$ this has to be $\hat{w}_t$.

    A consequence of all paths leading to $\hat{w}_t$ is that the underlying graph of $\M'_{\AU}(H)$ must be acyclic, except for the loop at the world $\hat{w}_t$.
    Assume $\M'_{\AU}(H)$ has a cycle.
    The cycle obviously cannot contain $\hat{w}_t$, because $\hat{w}_t$ has no outgoing relations except to itself, which contradicts the above statement.
    \begin{align*}
         & \M'_{\AU}(H), w_s \models \varphi_n ( = \varphi_{n-1} \AU x_n)                                         \\
         & \Rightarrow \forall \pi \in \Pi(w_s) \exists k \geq 1 : \M'_{\AU}(H), \pi[k] \models x_n \text{ and } \\
         & \qquad \forall i < k: \M'_{\AU}(H), \pi[i] \models \varphi_{n-1}
        \tag{definition AU} \\
         & \Rightarrow \forall \pi \in \Pi(w_s) \exists k \geq 1 : \M'_{\AU}(H), \pi[k] \models x_n \text{ and } \\
         & \qquad \M'_{\AU}(H), \pi[k-1] \models \varphi_{n-2} \AU x_{n-1}
        \tag{take $i = k-1$} \\
         & \Rightarrow \forall \pi \in \Pi(w_s) \exists k \geq 1 : \M'_{\AU}(H), \pi[k] \models x_n \text{ and }  \\
         & \qquad \forall \pi' \in \Pi(k-1) \exists j \geq 1 : \M'_{\AU}(H), \pi'[j] \models x_{n-1} \text{ and } \\
         & \qquad \forall i < j: \M'_{\AU}(H), \pi'[i] \models \varphi_{n-2}
        \tag{definition \AU}\\
         & \Rightarrow \forall \pi \in \Pi(w_s) \exists k \geq 1 : \M'_{\AU}(H), \pi[k] \models x_n \text{ and }  \\
         & \qquad \exists k' \geq k-1 :  \M'_{\AU}(H), \pi[k'] \models x_{n-1} \text{ and }                       \\
         & \qquad \forall i < j: \M'_{\AU}(H), \pi[i] \models \varphi_{n-2}
        \tag{all $\pi'$ in $\Pi(R)$ with some prefix}                                                             \\
         & \Rightarrow \forall \pi \in \Pi(w_s) \exists k \geq 1 : \M'_{\AU}(H), \pi[k] \models x_n \text{ and }  \\
         & \qquad \exists k' \geq k-1 :  \M'_{\AU}(H), \pi[k'] \models x_{n-1} \text{ and }                       \\
         & \qquad \M'_{\AU}(H), \pi[k'-1] \models \varphi_{n-2}
        \tag{take $i = k'-1$}
    \end{align*}
    Repeating these steps leads to the following
    \begin{align*}
         & \forall \pi \in \Pi(w_s) \exists k \geq 1 : \M'_{\AU}(H), \pi[k] \models x_n                 \\
         & \text{ and } \exists k' \geq k-1 :  \M'_{\AU}(H), \pi[k'] \models x_{n-1}                    \\
         & \text{ and } \dots                                                                           \\
         & \text{ and }\exists k^{(n-1)} \geq k^{(n-2)} - 1 : \M'_{\AU}(H), \pi[k^{(n-1)}] \models x_1.
    \end{align*}
    Notice that in the construction of $\M_{\AU}(H)$ the predecessor of worlds labeled with $w_i$ has no label themselves.
    Also notice that the worlds labeled with $w_i$ have no other label $w_j$ for $j \neq i$.
    Therefore $k^{(i)} > k^{(i-1)}$ instead of $k^{(i)} \geq k^{(i-1)}-1$ must hold for $0 \leq i \leq n-1$.

    From this we can conclude that all paths of $\M'_{\AU}(H)$ have to contain worlds labeled with $x_n$ to $x_1$.
    Since labeled worlds can only be reached from worlds $w_v$ for $v \in V$, each $w_v$ can be on a path only once due to the acyclicity of $\M'_{\AU}(H)$.
    There are $n$ worlds $w_v$ in total, which means that all $w_v$ must be on all paths of $\M'_{\AU}(H)$ exactly once.
    Obviously this can only be the case if there is only one path in $\Pi(\M'_{\AU}(H))$
    because second or more paths would either introduce a cylce or would not contain all $w_v$.

    By this we can conclude that 
    $$
    H \in \HAMPATH \iff \langle \M_{\AU}', \varphi_n \rangle \in \ExistSubmodel{\AU}.
    $$
    Notice that $\M_{\AU}(H)$ and $\varphi$ can be computed in polynomial time.
    Therefore $\HAMPATH \pmRed \ExistSubmodel{\AU}$ and with \Cref{th:exist_NP-hard_E_DelNP-hard} it follows that $\CTLsubmodel{\AU}$ is $\DelNP$-complete.

\end{proof}

Notice that the reduction merely requires $\AU$ as operator and no Boolean connectors are used.

\paragraph*{Fragment AR.}
We use a similar ``diamond'' expansion of the nodes in $G$, as in the proof for $\AU$.
Here an extra world is added to the construct between the middle worlds and the world for outgoing relations.
In addition, the labeling is extended to make $\AR$ behave in the indented way.
That is, we want to repeatedly force the left hand side of the $\AR$ operator in our constructed formula $\varphi$ to only hold in specific subsequent worlds to simulate the behavior of the $\AX$ operator.
The construction of $\varphi$, also requires that worlds labeled with $x_1, x_2, \dots, x_n$ are on the paths, similar to the proof of the $\AU$ fragment.
\begin{definition}\label{def:M_AR(H)}
    Let $G = (V,E)$ be a graph with $n = |V|$ and $H \coloneqq \langle G, s, t \rangle$ an instance of $\HAMPATH$.
    Define $\M_{\AR}(H) \coloneqq (W, R, \eta, w_s)$ as follows (see \Cref{fig:ex:M_AR(H)} for an example):
    \begin{align*}
        W \coloneqq             & \ \{w_v,\tilde{w}_v,\hat{w}_v, w_{v,i} \mid v \in V, 1 \leq i \leq n\}                 \\
        R \coloneqq             & \ \left\{
        \begin{array}{@{}l}
            (w_v,w_{v,i}),          \\
            (w_{v,i}, \tilde{w}_v), \\
            (\tilde{w}_v, \hat{w}_v)
        \end{array}
        \middle\vert\, v \in V, 1 \leq i \leq n \right\}                                                                 \\
                                & \cup\{(\hat{w_u}, w_v) \mid (u,v) \in E \text{ and } u \neq t\}                        \\
                                & \cup \{(\hat{w}_t,\hat{w}_t)\}                                                         \\
        \eta(w_{v,i}) \coloneqq & \ \{x_i\} \text{ for } 1 \leq i \leq n,\quad         \eta(\hat{w}_v) \coloneqq \{z,y\} \\
        \eta(w_v) \coloneqq     & \ \{z, x_1, \dots, x_n\}, \eta(\tilde{w}_v) \coloneqq \{y, x_1, \dots, x_n\}
    \end{align*}
\end{definition}
\begin{theorem}\label{th:DelNP:AR}
    $\CTLsubmodel{\AR}$ is $\DelNP$-complete.
\end{theorem}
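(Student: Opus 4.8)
The upper bound is immediate, since $\CTLsubmodel{\AR}\subseteq\CTLsubmodelPlain\in\DelNP$ by \Cref{th:CTL_InDelNP}. For the lower bound I would, as for the other fragments, reduce from $\HAMPATH$ and apply \Cref{th:exist_NP-hard_E_DelNP-hard}, so it suffices to show that $\ExistSubmodel{\AR}$ is $\NPtime$-hard. Given an instance $H=\langle G,s,t\rangle$ with $G=(V,E)$ and $n=|V|$, I take the Kripke model $\M_{\AR}(H)$ of \Cref{def:M_AR(H)} together with the formula $\varphi_n$ defined by $\varphi_0\coloneqq\top$ and $\varphi_i\coloneqq((\varphi_{i-1}\AR z)\AR y)\AR x_i$ for $1\le i\le n$, as illustrated in \Cref{fig:ex:M_AR(H)}. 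Both $\M_{\AR}(H)$ and $\varphi_n$ are clearly computable in $\Ptime$, and $\varphi_n$ uses only the operator $\AR$ and no Boolean connectors, which is the intended restriction.

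The claim to establish is $H\in\HAMPATH$ iff $\langle\M_{\AR}(H),\varphi_n\rangle\in\ExistSubmodel{\AR}$. For the forward direction, from a Hamiltonian path $s=v_1,v_2,\dots,v_n=t$ I would build the submodel $\M'$ that at the $i$-th ``diamond'' keeps only the intermediate world $w_{v_i,i}$ (deleting $w_{v_i,j}$ with $j\neq i$ and their edges), keeps the edges $w_{v_i}\to w_{v_i,i}\to\tilde w_{v_i}\to\hat w_{v_i}\to w_{v_{i+1}}$ for $i<n$, and keeps the self-loop at $\hat w_t$; then $\M'$ is a single infinite path, and unwinding the semantics of $\AR$ step by step along this path --- using $\top\AR\psi\equiv\psi$ to seed the induction and the labels $z$, $y$, $x_i$ to discharge each release --- yields $\M'\models\varphi_n$. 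For the converse, from $\M'\subseteq\M_{\AR}(H)$ with $\M'\models\varphi_n$ I would argue that (i) the underlying graph of $\M'$ is acyclic apart from the loop at $\hat w_t$; (ii) every path of $\M'$ meets worlds labelled $x_1,x_2,\dots,x_n$ in exactly this order, namely at its first, second, \dots, $n$-th diamond; since a world $w_{v,i}$ is reachable only through the corresponding $w_v$, this forces every path to traverse $n$ pairwise distinct worlds $w_v$, and being acyclic and infinite it must then enter the $\hat w_t$-loop, so it describes a Hamiltonian path of $G$ from $s$ to $t$. Reading off the vertex order then gives the required $\HAMPATH$-solution.

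The technical heart, and the step I expect to be the main obstacle, is (ii): proving that the nested $\AR$-formula genuinely enforces the ``one diamond at a time, index $i$ at the $i$-th diamond'' behaviour. Unlike the $\AU$ case, $\AR$ is a weak, safety-flavoured operator with no built-in progress guarantee, so one cannot simply derive an $\AF$-formula as in the proof of \Cref{th:DelNP:AU}; instead the argument must exploit the scaffolding built into $\M_{\AR}(H)$. The point is that every infinite path of the model meets a $z$-world and a $y$-world within a bounded number of steps, so the ``release never triggers'' branch in the semantics of $\AR$ collapses to an $\AG$-style requirement that is satisfiable only on the intended path shape, while the extra world $\tilde w_v$ and the alternation pattern of $z$ and $y$ make each pair $(\AR z,\AR y)$ behave like two $\AX$-steps. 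Concretely I would prove, by induction on $i$, an invariant of the form ``$\varphi_i$ holds at a world $w$ of $\M'$ iff the part of $\M'$ reachable from $w$ is a path-tail whose first $i$ diamonds pick the intermediate indices prescribed by $\varphi_i$'' and then instantiate it at $i=n$, $w=w_s$. With correctness established, the reduction is polynomial time, so $\HAMPATH\pmRed\ExistSubmodel{\AR}$, and \Cref{th:exist_NP-hard_E_DelNP-hard} together with the upper bound yields that $\CTLsubmodel{\AR}$ is $\DelNP$-complete.
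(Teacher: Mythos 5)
Your proposal is correct and follows essentially the same route as the paper: the same model $\M_{\AR}(H)$ and formula $\varphi_n$, with the key step being an induction on $i$ showing that the nested $\AR$-formula, via the $z$/$y$ labelling and the extra world $\tilde w_v$, forces every path to traverse the diamonds one at a time with index $n-i$ at the $(i{+}1)$-st diamond --- this invariant is exactly the paper's \Cref{clm:ARforcesSequence}. The only difference is that you leave the induction as a sketch while the paper carries out the base case and step explicitly, tracking which subformula of $\varphi_n$ holds at each of the four positions of a diamond.
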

\begin{proof}
    The upper bound follows from \Cref{th:CTL_InDelNP} again.
    For the lower bound we will again reduce from $\HAMPATH$, using the Kripke model $\M_{\AR}(H)$ defined in \Cref{def:M_AR(H)} and $\varphi_n$ with
    $$
        \varphi_i \coloneqq \begin{cases}
            ((\varphi_{i-1} \AR z) \AR y) \AR x_i & \text{if } i > 0, \\
            \top                                  & \text{if } i = 0.
        \end{cases}
    $$

    We will show that $\langle G, s, t \rangle \mapsto \langle \M_{\AR}(H), \varphi_n \rangle$ is a reduction function for $\HAMPATH \pmRed \ExistSubmodel{\AR}$.

    Suppose $\langle G, s, t \rangle$ is an instance of $\HAMPATH$.
    Further let $\M'_{\AR}$ be a satisfying submodel of $\M_{\AR}(H)$.
    The following claim shows that $\varphi_n$ forces $\M'_{\AR}$ to only contain paths of a certain form.
    \begin{claim}\label{clm:ARforcesSequence}
        For any path $\pi \in \Pi(\M'_{\AR})$ we have that for all $0 \leq i < n$ exists a $v \in V$ such that
        \begin{align*}
            \pi[4i+1] & = w_v,         & \pi[4i+2] & = w_{v,n - i}, \\
            \pi[4i+3] & = \tilde{w}_v, & \pi[4i+4] & = \hat{w}_v
        \end{align*}
        and
        \begin{align*}
            \M'_{\AR}, \pi[4i+1] & \models ((\varphi_{n-i-1} \AR z) \AR y) \AR x_{n-i} \\
            \M'_{\AR}, \pi[4i+3] & \models (\varphi_{n-i-1} \AR z) \AR y               \\
            \M'_{\AR}, \pi[4i+4] & \models \varphi_{n-i-1} \AR z.
        \end{align*}
    \end{claim}
    \begin{proof}
        We proceed by induction on $i$.
        For the base case $i = 0$, it is clear that $\pi[1] = w_s$, since all paths start at the root.
        Notice that from
        \[
            \M'_{\AR}, w_s \models ((\varphi_{n-1} \AR z) \AR y) \AR x_n
        \]
        and
        \[
            \M'_{\AR}, w_s \not\models (\varphi_{n-1} \AR z) \AR y
        \]
        it follows that $\M'_{\AR}, \pi[2] \models x_n$.
        By the construction of $\M_{\AR}(H)$ the only world directly after $w_s$ labeled with $x_n$ is $w_{s,n}$, thus we have $\pi[2] = w_{s,n}$.
        It is then obvious that $\pi[3] = \tilde{w}_s$ and $\pi[4] = \hat{w}_s$,
        because $\tilde{w}_s$ is the only successor of $w_{s,n}$ and has only $\hat{w}_s$ as its successor.

        Now consider how this unravels $\varphi_n$.
        First observe that $\M'_{\AR}, \pi[4] \not\models x_n$, it follows that there exists an $i < 4$ such that
        \[
            \M'_{\AR}, \pi[i] \models (\varphi_{n-1} \AR z) \AR y.
        \]
        We have already established that $i$ cannot be $1$.
        It also cannot be $2$, because $y \not\in \eta(\pi[2]) (= \eta(w_{s,n}))$
        Therefore, if $\M'_{\AR}$ is a satisfying submodel, then $i = 3$ and
        \[
            \M'_{\AR}, \pi[3] \models (\varphi_{n-1} \AR z) \AR y.
        \]
        Similarly, since all successors of $\hat{w}_s$ in $\M_{\AR}(H)$ (and thus in all its submodels) are not labeled with $y$ and $\tilde{w}_s$ is not labeled with $z$, it follows that
        \[
            \M'_{\AR}, \pi[4] \models \varphi_{n-1} \AR z.
        \]

        \medskip

        For the induction step, we have
        \[
            \M'_{\AR}, \pi[4i+4] \models \varphi_{n-i-1} \AR z
        \]
        and $\pi[4i+4] = \hat{w}_v$ from the induction hypothesis.
        Notice that all successors $w_u$ of $\hat{w}_v$ are labeled with $z$, while their successors are not.
        Therefore
        \begin{align*}
            \M'_{\AR}, \pi[4i+5] \models ((\varphi_{n-i-2} \AR z) \AR y) \AR x_{n-i-1}
        \end{align*}
        with $\pi[4i+5] = w_u$ for some $u \in V$.
        Note again that this formula can only hold at $\pi[4i+5]$ and not $\pi[4i+4]$, because $x_{n-i-1} \not\in \eta(\pi[4i+4])$.
        Similar to the base case, the only successor of $w_u$ labeled with $x_{n-i-1}$ is $w_{u,n-i-1}$, therefore $\pi[4i+6] = w_{u,n-i-1}$.
        $\pi[4i+7] = \tilde{w}_u$ and $\pi[4i+8] = \hat{w}_u$ follow immediately.

        We can again observe the unraveling of the formula.
        \begin{equation*}
            \M'_{\AR}, \pi[4i+3] \models (\varphi_{n-i-2} \AR z) \AR y
        \end{equation*}
        must be true because, $y \not\in \eta(\pi[4i+6])$ and $x_{n-i-2} \not\in \eta(\pi[4i+8])$.
        Also
        \begin{equation*}
            \M'_{\AR}, \pi[4i+8] \models \varphi_{n-i-2} \AR z
        \end{equation*}
        because $z \not\in \eta(\pi[4i+7])$ and no successor of $\hat{w}_u$ has label $y$.
    \end{proof}

    It follows from \Cref{clm:ARforcesSequence} that all paths of a submodel satisfying $\varphi$ visit $\geq n$ worlds $w_{v,i}$.
    \Cref{clm:ARforcesSequence} also shows that
    \[
        \M'_{\AR}, w_v \models ((\varphi_{n-i-1} \AR z) \AR y) \AR x_{n-i}
    \]
    which forces all successors of $w_v$ to be labeled with $x_{n-i}$.
    For this to be true $w_{v,n-i}$ has to be the only successor of $w_v$.
    From this we can conclude that all $w_{v,n-i}$ must have different $v$.
    With $\geq n$ worlds $w_{v,i}$ on any path and $|V| = n$, it follows that all paths visit all worlds $w_v$ once.
    Notice that by our construction of the Kripke model, world $\hat{w}_t$ is a ``dead-end'' and therefore must be visited last.

    This shows that all satisfying submodels of $\M_{\AR}(H)$ must describe a Hamiltonian path of $G$.
    The reduction function is computable in polynomial time, since both the model $\M_{\AR}(H)$ and the formula $\varphi$ can be constructed done in polynomial time with respect to the graph $G$.
\end{proof}

Notice that the reduction merely requires $\AR$ as operator and no Boolean connectors are used.

\paragraph*{Fragment EX, EF, EG, EU \& ER.}
$\DelNP$-hardness of existentially quantified operators follows immediately, when considering negation.
\begin{corollary}
    For $\emptyset \neq \mathcal O \subseteq \{\EX, \EF, \EG, \EU, \ER\}$ we have that $\CTLsubmodel{\mathcal O}$ is $\NPtime$-complete.
\end{corollary}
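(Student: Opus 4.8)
The plan is to derive all five hardness statements from the reductions already built for the universal fragments, invoking nothing beyond the De Morgan equivalences in \Cref{ob:CTL_equis}. The key observation is that the definition of $\CTLsubmodel{\mathcal O}$ (and of $\ExistSubmodel{\mathcal O}$) restricts only the \emph{CTL} operators: every Boolean connector, $\lnot$ included, stays available. Hence for a CTL operator $\mathsf T$ and its dual $\mathsf T^{\flat}$ the formulas expressible over $\{\mathsf T\}\cup\{\land,\lor,\lnot\}$ agree, up to equivalence and a linear-time syntactic rewriting, with those expressible over $\{\mathsf T^{\flat}\}\cup\{\land,\lor,\lnot\}$: one replaces every $\mathsf T$ by the matching $\lnot\mathsf T^{\flat}\lnot$-pattern and pushes negations to the leaves. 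Since the relevant dual pairs are exactly $(\EX,\AX)$, $(\EF,\AG)$, $(\EG,\AF)$, $(\EU,\AR)$ and $(\ER,\AU)$, the five universal fragments handled in \Cref{th:CTL_A_DelNP-hard,th:CTL_AF_land,th:DelNP:AX,th:DelNP:AU,th:DelNP:AR} are precisely the duals of the five existential operators in the corollary.

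I would then compose each existing polynomial reduction with this rewriting. From \Cref{th:DelNP:AX} one gets $\AX^{n-1}x_t \equiv \lnot\EX^{n-1}\lnot x_t$, hence $\HAMPATH \pmRed \ExistSubmodel{\{\EX\}}$ over the same model $\M(H)$. From \Cref{th:CTL_AF_land}, $\bigwedge_{v\in V}\AF x_v \equiv \bigwedge_{v\in V}\lnot\EG\lnot x_v$, hence $\HAMPATH \pmRed \ExistSubmodel{\{\EG\}}$. From \Cref{th:CTL_A_DelNP-hard}, using $\AG\chi \equiv \lnot\EF\lnot\chi$, the formula $\varphi_{\AG}$ becomes the same $\{\land,\lor\}$-skeleton with each leaf $\AG(x_i\to x_i^k)$ replaced by $\lnot\EF(x_i\land\lnot x_i^k)$, hence $\SAT \pmRed \ExistSubmodel{\{\EF\}}$. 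Finally, applying $\psi_1\AU\psi_2 \equiv \lnot(\lnot\psi_1\ER\lnot\psi_2)$ and $\psi_1\AR\psi_2 \equiv \lnot(\lnot\psi_1\EU\lnot\psi_2)$ recursively to the inductively defined formulas of \Cref{th:DelNP:AU,th:DelNP:AR} produces $\ER$- and $\EU$-formulas equivalent over every pointed Kripke model, hence over every submodel, so the constructions $\M_{\AU}(H)$ and $\M_{\AR}(H)$ are reused unchanged, giving $\HAMPATH \pmRed \ExistSubmodel{\{\ER\}}$ and $\HAMPATH \pmRed \ExistSubmodel{\{\EU\}}$. In every case the rewriting is a genuine semantic equivalence and polynomial-time computable, so the set of satisfying submodels is untouched and the reduction remains correct.

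To conclude for an arbitrary nonempty $\mathcal O \subseteq \{\EX,\EF,\EG,\EU,\ER\}$, pick any $\mathsf T\in\mathcal O$: every instance of $\ExistSubmodel{\{\mathsf T\}}$ is verbatim an instance of $\ExistSubmodel{\mathcal O}$, so $\ExistSubmodel{\mathcal O}$ is $\NPtime$-hard; \Cref{th:exist_NP-hard_E_DelNP-hard} upgrades this to $\DelNP$-hardness of $\CTLsubmodel{\mathcal O}$, and \Cref{th:CTL_InDelNP} supplies the matching $\DelNP$ membership. I expect no real obstacle here; the only delicate point is the bookkeeping when unfolding the nested $\AU$/$\AR$ formulas, so as to confirm that after rewriting each formula uses only the intended single CTL operator --- the remaining $\lnot$'s being Boolean connectors, which are permitted, and any atomic negations being removable by relabelling, exactly as in the earlier proofs.
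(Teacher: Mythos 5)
Your proposal is correct and takes essentially the same route as the paper: the paper's proof is a one-liner invoking the duality equivalences of \Cref{ob:CTL_equis} together with the hardness results for the universal fragments, which is exactly the rewriting you spell out (with the right dual pairings $\EX/\AX$, $\EF/\AG$, $\EG/\AF$, $\EU/\AR$, $\ER/\AU$, reuse of the same models, and the observation that $\lnot$ remains available as a Boolean connector). Your additional bookkeeping --- singleton fragments embedding into arbitrary nonempty $\mathcal O$, membership via \Cref{th:CTL_InDelNP} --- is all consistent with what the paper intends.
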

\begin{proof}
    Follows directly form the duality between the existential and universal path quantifiers (see \Cref{ob:CTL_equis}) and our results for the universally quantified cases.
\end{proof}
Notice that for the fragments $\EX$, $\EU$ and $\ER$ negation suffices as the only Boolean connector to archive intractability.
In contrast the fragments $\EF$ and $\EG$ require all Boolean connectors.
We summarise all results in one statement.
\begin{theorem}
    Let $\emptyset \neq \mathcal O\subseteq\ALL$ be a set of $\CTL$ operators.
    Then $\CTLsubmodel{\mathcal O}$ is $\DelNP$-complete.
\end{theorem}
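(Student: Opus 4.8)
The plan is to read the statement off the fragment-wise results already established, via a case distinction on which operator occurs in $\mathcal O$; essentially no new construction is needed beyond bookkeeping.

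For the upper bound I would observe that, for every $\mathcal O \subseteq \ALL$, the problem $\CTLsubmodel{\mathcal O}$ is merely the restriction of $\CTLsubmodelPlain$ to inputs whose formula uses only operators from $\mathcal O$. Hence the identity map is an oracle-bounded enumeration oracle machine witnessing $\CTLsubmodel{\mathcal O} \leq_D \CTLsubmodelPlain$, and membership in $\DelNP$ is inherited from \Cref{th:CTL_InDelNP}.

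For the lower bound I would pick any $O \in \mathcal O$, which is possible since $\mathcal O \neq \emptyset$. Every $\CTL$ formula built from the single operator $O$ together with Boolean connectors is also an $\mathcal O$-formula, so the identity map again yields $\CTLsubmodel{\{O\}} \leq_D \CTLsubmodel{\mathcal O}$. As $D$-reductions compose and transfer $\DelNP$-hardness, it then remains only to invoke $\DelNP$-hardness of $\CTLsubmodel{\{O\}}$ for each of the ten $\CTL$ operators: the universal operators $\AG$, $\AF$, $\AX$, $\AU$, $\AR$ are handled by \Cref{th:CTL_A_DelNP-hard,th:CTL_AF_land,th:DelNP:AX,th:DelNP:AU,th:DelNP:AR}, and the existential operators $\EX$, $\EF$, $\EG$, $\EU$, $\ER$ by the preceding corollary, which reduces them to the universal ones using the duality equivalences of \Cref{ob:CTL_equis} and negation. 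Since these ten cases exhaust all singletons $\{O\}$ with $O \in \ALL$, they cover every nonempty $\mathcal O \subseteq \ALL$.

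The one point that will need care — and the main, if modest, obstacle, since the substantive combinatorial arguments were already carried out in the individual fragment proofs — is verifying that $\DelNP$-hardness genuinely propagates ``upward'' along this identity inclusion: I would check that the identity map is a legitimate $D$-reduction (it is oracle-bounded, runs with polynomial delay, and is vacuously independent of the order in which the oracle streams its solutions) and that $D$-reductions are transitive, so that pre-composing with a fragment hardness result yields $\DelNP$-hardness of $\CTLsubmodel{\mathcal O}$. I would also record that the problem definition permits arbitrary Boolean connectors regardless of $\mathcal O$, so the differing Boolean requirements of the individual reductions (all connectors for $\EF$, $\EG$; negation only for $\EX$, $\EU$, $\ER$; $\land$ for $\AF$; $\land,\lor$ plus atomic negation for $\AG$; none for $\AX$, $\AU$, $\AR$) are all met, and no new reduction has to be constructed.
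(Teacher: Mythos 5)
Your proposal is correct and matches the paper's (implicit) argument: the theorem is stated there purely as a summary, with the upper bound inherited from the general $\DelNP$ membership result and the lower bound obtained by picking a single operator from $\mathcal O$ and invoking the corresponding fragment hardness theorem (or the duality corollary for the existential operators). Your extra care about the identity map being a legitimate, transitive $D$-reduction and about the Boolean connectors always being permitted is exactly the bookkeeping the paper leaves unstated.
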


\section{The silver lining}
In this section, we strive for tractability results.
For this we restrict also the allowed Boolean connectors and accordingly require to extend the problem notion a bit as follows.
For instance, we will write $\ExtendSubmodel{\EX, \ER, \EU}$ whenever we restrict the formulas to \emph{only} the operators $\EX, \ER, \EU$ without \emph{any} Boolean connectors.

\paragraph*{Fragment EX, ER, EU \& Conjunction, Disjunction.}
The first tractability result we present is a restriction to formulas only containing existentially quantified $\CTL$ operators and no negation.
That is, we show $\DelP$ membership of $\CTLsubmodel{\EX, \ER, \EU, \land, \lor}$.
Recall that we have $\EF \varphi = \top \EU \varphi$ and $\EG \varphi = \varphi \ER \bot$.

The next Lemma gives a straightforward way to decide $\ExtendSubmodel{\EX, \ER, \EU, \land, \lor}$, by only having to consider the model and partial solution.
\begin{lemma}\label{lem:M_models_if_M'_models}
    Let $\M' \subseteq \M$ be a submodel.
    If $\M\not\models \varphi$, for any $\{\EX, \EU, \ER, \land, \lor\}$-formula $\varphi$, then $\M' \not\models \varphi$.
\end{lemma}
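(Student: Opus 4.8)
The plan is to prove the contrapositive in a positive form: show by structural induction on $\varphi$ that for every $\{\EX,\EU,\ER,\land,\lor\}$-formula $\varphi$, every submodel $\M'\subseteq\M$, and every world $w$ of $\M'$, if $\M',w\models\varphi$ then $\M,w\models\varphi$. The lemma as stated then follows by taking $w=r$ and contraposing. The intuition is monotonicity: formulas in this fragment contain no negation, no universal path quantifier, and no ``box''-like next operator, so adding worlds and transitions back can only create more witnesses, never destroy the ones a submodel already has.

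First I would set up the induction. The base cases $\varphi=\top$ and $\varphi=p$ are immediate, since $\eta$ restricted to $W'$ agrees with $\eta$ on all worlds of $\M'$, so $p\in\eta\!\upharpoonright_{W'}(w)$ iff $p\in\eta(w)$; and $\top$ holds everywhere. For the Boolean connectives $\land$ and $\lor$, the claim follows directly from the induction hypothesis applied to the immediate subformulas at the same world $w$. The crucial observation underpinning the modal cases is that $R'\subseteq R$ implies $\Pi_{\M'}(w)\subseteq\Pi_{\M}(w)$: every infinite path of $\M'$ starting at $w$ is also an infinite path of $\M$ starting at $w$ (it uses only edges of $R'$, hence of $R$, and only worlds of $W'\subseteq W$). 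This is where totality of $R'$ matters — it guarantees the witnessing path in $\M'$ is genuinely infinite and thus a legitimate path object.

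For $\EX\psi$: if $\M',w\models\EX\psi$ there is $\pi\in\Pi_{\M'}(w)$ with $\M',\pi[2]\models\psi$; the same $\pi$ lies in $\Pi_{\M}(w)$, and by induction $\M,\pi[2]\models\psi$, so $\M,w\models\EX\psi$. For $\psi_1\EU\psi_2$: take the witnessing path $\pi\in\Pi_{\M'}(w)$ and index $k$ from the $\M'$-semantics; the same $\pi$ is an $\M$-path, and applying the induction hypothesis to $\psi_2$ at $\pi[k]$ and to $\psi_1$ at each $\pi[i]$, $i<k$, gives $\M,w\models\psi_1\EU\psi_2$. For $\psi_1\ER\psi_2$ one argues analogously: a witnessing path $\pi\in\Pi_{\M'}(w)$ either satisfies $\psi_2$ at every position or has some early position where $\psi_1$ holds; in both disjuncts the induction hypothesis transfers the relevant subformula facts along the same path viewed in $\M$, so $\M,w\models\psi_1\ER\psi_2$. (If the paper reads $\EF,\EG$ as primitive rather than via the stated equivalences $\EF\psi\equiv\top\EU\psi$, $\EG\psi\equiv\psi\ER\bot$, the same path-inclusion argument handles them directly.)

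I do not expect a serious obstacle here — the whole argument is the standard monotonicity observation that existential path quantifiers are preserved under taking superstructures — but the one point that deserves care is making the path-inclusion $\Pi_{\M'}(w)\subseteq\Pi_{\M}(w)$ precise, including that a path in $\M'$ really is infinite (using totality of $R'$) and that $\M'$ and $\M$ assign the same propositions to every world of $\M'$. Once that is nailed down, every modal case is a one-line transfer of a witness. I would state the strengthened ``for all $w\in W'$'' version as an auxiliary claim, prove it by the induction above, and then derive \Cref{lem:M_models_if_M'_models} as the immediate contrapositive at the root.
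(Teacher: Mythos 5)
Your proof is correct and follows essentially the same route as the paper: the paper also contraposes and argues that every witnessing path of $\M'$ is a path of $\M$ and that, absent negation, the same paths satisfy $\varphi$ in $\M$. You simply carry out in full (as a structural induction with the key inclusion $\Pi_{\M'}(w)\subseteq\Pi_{\M}(w)$) what the paper states in two sentences.
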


\begin{proof}
    To prove this lemma consider its contraposition, i.e., $\M' \models \varphi$ implies $\M\models \varphi$.
    Note that the set of paths that satisfy $\varphi$ in $\M'$ also exist in $\M$.
    Since $\varphi$ does not contain negation, the same set of paths must satisfy $\varphi$ in $\M$.
\end{proof}

\begin{theorem}
    $\CTLsubmodel{\EX, \ER, \EU, \land, \lor} \in \DelP$
\end{theorem}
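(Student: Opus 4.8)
The plan is to apply \Cref{prop:ExtendSolInC-EInDelC}, so it suffices to place $\ExtendSubmodel{\EX, \ER, \EU, \land, \lor}$ in $\Ptime$. An instance is a tuple $(\M, \varphi, D)$ with $\M = (W, R, \eta, r)$ and a set of deletions $D = (D_W, D_R)$; the question is whether some $D' \supseteq D$ makes $\M - D'$ a connected submodel with $\M - D' \models \varphi$. The central observation is the monotonicity provided by \Cref{lem:M_models_if_M'_models}: if $\M' \subseteq \M$ and $\M' \models \varphi$, then $\M \models \varphi$, so shrinking a model along $\subseteq$ can only destroy satisfaction of a negation-free formula. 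Hence, among all completions of $D$, the ``best'' candidate is the \emph{largest} legal submodel contained in $\M - D$, and it suffices to test that single candidate.

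Concretely, I would compute a canonical deletion set $\bar D \supseteq D$ as the fixpoint obtained from $\M - D$ by repeatedly (i) deleting every world not reachable from $r$ and (ii) deleting every world that has no remaining successor, together with the incident edges. This stabilises after at most $|W|$ rounds, hence in polynomial time. If $r \in \bar D_W$, the algorithm answers \textsc{no}: the root is present in every submodel, but here it cannot avoid becoming a dead end, so no legal completion of $D$ exists. Otherwise $\M - \bar D$ is a valid connected submodel, and the algorithm answers \textsc{yes} iff $\M - \bar D \models \varphi$; this last check runs in polynomial time since $\CTL$ model checking is in $\Ptime$.

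For correctness, if $\M - \bar D \models \varphi$ then $D' \coloneqq \bar D$ already witnesses a \textsc{yes}-instance. Conversely, let $D' \supseteq D$ be such that $\M - D'$ is a connected submodel with $\M - D' \models \varphi$. A short induction on the number of pruning rounds shows that no world of $\M - D'$ is ever removed: such a world has a successor via an edge of $R \setminus D'_R \subseteq R \setminus D_R$, so rule (ii) never fires on it, and it is reachable from $r$ within $\M - D' \subseteq \M - D$, so rule (i) never fires on it. Therefore $\M - D' \subseteq \M - \bar D$, and \Cref{lem:M_models_if_M'_models}, applied with $\M - \bar D$ in the role of $\M$, yields $\M - \bar D \models \varphi$. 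This gives the desired equivalence.

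The step I expect to demand the most care is the bookkeeping around submodel legality: verifying that the fixpoint $\M - \bar D$ really satisfies all clauses of the connected-submodel definition (nonempty world set, total transition relation, every world witnessed by an infinite path from $r$), and making sure the $\ExtendSol$ semantics are respected even when the given $D$ cannot be completed to any legal submodel --- which is exactly the role of the $r \in \bar D_W$ rejection case.
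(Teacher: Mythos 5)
Your proposal is correct and follows the same route as the paper: reduce to showing $\ExtendSubmodel{\EX, \ER, \EU, \land, \lor} \in \Ptime$ via \Cref{prop:ExtendSolInC-EInDelC}, invoke the monotonicity of negation-free existential formulas (\Cref{lem:M_models_if_M'_models}) to conclude that a single maximal candidate needs to be model checked, and then run a polynomial-time $\CTL$ model checker on it. The one substantive difference is that the paper model checks $\M - D$ directly, whereas you first prune to the canonical largest \emph{legal} connected submodel $\M - \bar D$ (removing unreachable and dead-end worlds) and reject outright when the root is pruned. This extra step is not wasted effort: it handles partial deletion sets $D$ for which $\M - D$ fails totality or connectedness, a case the paper's two-line argument silently glosses over, and your induction showing that every legal completion $\M - D'$ survives the pruning (so that \Cref{lem:M_models_if_M'_models} can be applied with $\M - \bar D$ as the ambient model) is exactly the right way to justify testing only that one candidate.
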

\begin{proof}
    We now briefly describe a simple deterministic polynomial time al\-go\-rithm that decides $\ExtendSubmodel{\EX, \ER, \EU, \land, \lor}$.
    By \Cref{lem:M_models_if_M'_models}, if, for a partial solution, we have that ${\M - D} \not\models \varphi$, then it cannot be extended to an actual solution.
    Conversely, if $\M - D \models \varphi$ is true, then the empty extension is sufficient.
    Thus, any polynomial time model checking algorithm on an instance $\langle \M - D, \varphi\rangle$ can be used to decide $\ExtendSubmodel{\EX, \ER, \EU, \land, \lor}$.
\end{proof}

\paragraph*{Fragment AF \& AG.}

We adapted a result presented by Krebs et al.~\cite[Lemma 10]{DBLP:journals/acta/KrebsMM19}, showing that every $\{\AF,\AG\}$-formula can be reduced to contain at most two temporal operators.
\begin{lemma}\label{lem:FG_trimming}
    For any formula $\varphi$ we have that
    \begin{enumerate} 
        \item $\AF\AF \varphi \equiv \AF \varphi$
        \item $\AG\AG \varphi \equiv \AG \varphi$
        \item $\AG\AF\AG \varphi \equiv \AF\AG \varphi$
        \item $\AF\AG\AF \varphi \equiv \AG\AF \varphi$
    \end{enumerate}
\end{lemma}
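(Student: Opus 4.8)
The plan is to prove each equivalence \emph{pointwise}: I show $\M,w\models\psi_1$ iff $\M,w\models\psi_2$ for every rooted Kripke model $\M$ and world $w$ (hence in particular at the root, which is what the applications need). Two elementary validities carry most of the load. For every CTL formula $\psi$: (a) $\M,w\models\AG\psi$ implies $\M,w\models\psi$; and (b) $\M,w\models\psi$ implies $\M,w\models\AF\psi$. Both follow by instantiating the path and index quantifiers with $k=1$, using that $R$ is total so $\Pi(w)\neq\emptyset$. I also want a \emph{propagation lemma}: if $\M,u\models\AG\psi$ and $u'$ is reachable from $u$ along $R$, then $\M,u'\models\AG\psi$; this is immediate by prepending the connecting finite prefix to any path out of $u'$. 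Equivalently — and I will use this form freely — $\M,w\models\AG\psi$ holds exactly when every world in $\mathrm{Reach}(w)$, the set of $R$-reachable worlds from $w$, satisfies $\psi$ (totality of $R$ guarantees every finite prefix extends to an infinite path).

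With these, the ``easy'' directions are one-liners: the left-to-right direction of (2) is (a) with $\psi\coloneqq\AG\varphi$; the left-to-right direction of (3) is (a) with $\psi\coloneqq\AF\AG\varphi$; the right-to-left direction of (1) is (b) with $\psi\coloneqq\AF\varphi$; and the right-to-left direction of (4) is (b) with $\psi\coloneqq\AG\AF\varphi$. For the remaining direction of (1), I argue directly: if $\M,w\models\AF\AF\varphi$ and $\pi\in\Pi(w)$ is arbitrary, pick $k$ with $\M,\pi[k]\models\AF\varphi$; the suffix of $\pi$ from position $k$ lies in $\Pi(\pi[k])$, hence meets a $\varphi$-world, and that world lies on $\pi$. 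For the remaining direction of (2), $\M,w\models\AG\varphi$ says all of $\mathrm{Reach}(w)$ satisfies $\varphi$; for $u\in\mathrm{Reach}(w)$ we have $\mathrm{Reach}(u)\subseteq\mathrm{Reach}(w)$ by transitivity of reachability, so $\M,u\models\AG\varphi$, that is, $\M,w\models\AG\AG\varphi$.

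The two substantial cases are the right-to-left direction of (3), $\AF\AG\varphi\models\AG\AF\AG\varphi$, and the left-to-right direction of (4), $\AF\AG\AF\varphi\models\AG\AF\varphi$; both use the same \emph{path-splicing} argument. Assume $\M,w\models\AF\AG\varphi$ (resp.\ $\M,w\models\AF\AG\AF\varphi$), fix $u\in\mathrm{Reach}(w)$ witnessed by a finite path $w=u_0,u_1,\dots,u_m=u$, and fix an arbitrary $\sigma\in\Pi(u)$. The spliced sequence $\tau\coloneqq u_0,\dots,u_{m-1},\sigma[1],\sigma[2],\dots$ is a path in $\Pi(w)$ (note $\sigma[1]=u=u_m$), so by hypothesis there is a position $j$ with $\M,\tau[j]\models\AG\varphi$ (resp.\ $\models\AG\AF\varphi$). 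If $j\geq m+1$ the witness lies on $\sigma$: for (3) this already says $\sigma$ meets an $\AG\varphi$-world; for (4) it gives a position $j'$ on $\sigma$ with $\M,\sigma[j']\models\AG\AF\varphi$, hence $\M,\sigma[j']\models\AF\varphi$ by (a), and applying $\AF\varphi$ to the suffix of $\sigma$ from $j'$ shows $\sigma$ meets a $\varphi$-world. If $j\leq m$, the witness is $u_{j-1}$ with $j-1<m$, so $u=u_m$ is reachable from it, and the propagation lemma gives $\M,u\models\AG\varphi$ (resp.\ $\M,u\models\AG\AF\varphi$); then $\sigma[1]=u$ is already the required witness for (3), while for (4) we get $\M,u\models\AF\varphi$ by (a), so $\sigma$ meets a $\varphi$-world. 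In either case $\sigma$ behaves as required, and since $\sigma\in\Pi(u)$ and $u\in\mathrm{Reach}(w)$ were arbitrary, $\M,w\models\AG\AF\AG\varphi$ (resp.\ $\M,w\models\AG\AF\varphi$).

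I do not expect a real obstacle. The only things needing care are the index bookkeeping in the splice, invoking the propagation lemma in exactly the right form in the $j\leq m$ case (so that an $\AG$-headed formula true somewhere on the prefix is true at the split point $u$), and recording once that a suffix of a path is again a path, which is what lets an $\AF$-witness on a suffix lift to the whole path. As an alternative, (2) and the $\AG$-headed halves of (3) and (4) could instead be obtained by dualising the corresponding $\EF$/$\EG$ identities, but the direct route above is shorter.
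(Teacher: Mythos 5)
Your proof is correct; every step checks out, including the index bookkeeping in the splice and the use of totality to get $\Pi(w)\neq\emptyset$ for the $k=1$ instantiations. The overall technique (unfolding the semantics, observing that a suffix of a path is a path, and splicing a finite prefix onto an arbitrary continuation) is the same one the paper uses, but your organization differs in two genuine ways. First, you factor everything through a reachability characterization of $\AG$ plus a propagation lemma ($\AG\psi$ persists along $R$), which lets you treat (2) and the hard directions of (3) and (4) uniformly as ``world-first'' arguments: fix a reachable $u$, fix an arbitrary $\sigma\in\Pi(u)$, splice, and case on where the witness lands. The paper instead proves (1) and (2) by a chain of semantic biconditionals, and for (3) argues ``path-first'' by picking the minimal position where $\AG\varphi$ holds on an arbitrary path and splicing from there; your version makes the final universal quantification over paths of $\Pi(u)$ more explicit than the paper's closing ``since $\pi$ and $\sigma$ are arbitrary'' step. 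Second, for the nontrivial direction of (4) the paper argues by contradiction via the dual: $\lnot\AG\AF\varphi$ yields $\M,w\models\EF\EG\lnot\varphi$, whose witnessing path has a suffix on which $\varphi$ never holds, contradicting $\AF\AG\AF\varphi$; you instead give a direct splicing proof identical in shape to your (3). What your route buys is uniformity and slightly tighter quantifier hygiene; what the paper's route buys is brevity, especially in (4), where the duality argument is a few lines.
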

\begin{proof}
    (1)
    \begin{align*}
         & \M, w \models \AF\AF \varphi                      \\
         & \Leftrightarrow \forall \pi \in \Pi(w) \exists k \geq 1 \forall \sigma \in \Pi(\pi[k]) \exists j \geq 1 : \M, \sigma[j] \models \varphi \\
         & \Leftrightarrow \forall \pi \in \Pi(w) \exists k \geq 1 \exists j \geq k : \M, \pi[j] \models \varphi                                   \\
         & \Leftrightarrow \forall \pi \in \Pi(w) \exists k \geq 1 : \M, \pi[k] \models \varphi                                                    \\
         & \Leftrightarrow \M, w \models \AF \varphi
    \end{align*}

    (2) analogously.

    (3) $\M, w \models \AG\AF\AG \varphi \Rightarrow \M, w \models \AF\AG\varphi$ is trivial.
    For the other direction, assume $\M, w \models \AF\AG\varphi$.
    Let $\pi \in \Pi(w)$ be an arbitrary path.
    $\M, \pi[k] \models \AG\varphi$ holds for some $k$ with $\M, \pi[i] \not\models \AG\varphi$ for all $i < k$.
    With $\M, \pi[1] \models \AF \AG\varphi$, it follows that $\M, \pi[i] \models \AF \AG\varphi$ for all $i < k$.
    Further take some $\sigma \in \Pi(\pi[k])$.
    From $\M, \pi[k] \models \AG\varphi$ it follows that $\M, \sigma[j] \models \AG\varphi$ which leads to $\M, \sigma[j] \models \AF\AG\varphi$ for all $j \geq 1$.
    Therefore, we have an infinite path $\rho = \pi[1], \pi[2], \dots \pi[k-1], \sigma[1] (= \pi[k]), \sigma[2]$ with $\M, \rho[i] \models \AF \AG\varphi$ for all $i \geq 1$.
    Since $\pi$ and $\sigma$ are arbitrary, this holds for all $\rho \in \Pi(w)$, so $\M, w \models \AG\AF\AG\varphi$.

    (4) $\M, w \models \AG\AF\varphi \Rightarrow \M, w \models \AF\AG\AF\varphi$ is trivial.
    For the other direction, assume $\M, w \models \AF\AG\AF\varphi$.
    Now suppose $\M, w \not\models \AG\AF\varphi$.
    By the duality of $\AG$ and $\AF$ it follows that $\M, w \models \EF\EG \lnot\varphi$, but this cannot be true without contradicting our assumption.
    On a path $\pi \in \Pi(w)$ witnessing this there would be a $k \geq 1$ such that for all $i \geq k : \M, \pi[i] \models \lnot\varphi$.
    But this contradicts our assumption that on all path, there would be an $k \geq 1$ such that for all $i \geq k$ there is an $h \geq i : \M, \pi[h] \models\varphi$.
    We can therefore conclude that $\M, w \models \AG\AF\varphi$.
\end{proof}

\begin{theorem}\label{th:FG_inDelP}
    $\CTLsubmodel{\AF,\AG}$ is in $\DelP$.
\end{theorem}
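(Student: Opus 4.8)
The plan is to apply \Cref{prop:ExtendSolInC-EInDelC}, so it suffices to show that $\ExtendSubmodel{\AF,\AG}\in\Ptime$. Given an instance $\langle\M,\varphi,D\rangle$, the question is whether some $D'\supseteq D$ exists with $\M-D'\models\varphi$. The key structural fact that makes this tractable is \Cref{lem:FG_trimming}: by repeatedly applying the four equivalences, any $\{\AF,\AG\}$-formula can be rewritten in polynomial time into an equivalent formula in which every maximal block of nested $\AF$'s and $\AG$'s collapses to one of the shapes $\AF\psi$, $\AG\psi$, $\AF\AG\psi$, or $\AG\AF\psi$ applied to a subformula $\psi$ of strictly smaller temporal depth. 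So the first step is to normalise $\varphi$ this way and induct on its structure.

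For the base case, $\varphi$ is a Boolean combination of propositions, which depends only on the labelling of the root $r$; since $\eta$ and $r$ are unchanged by deletions, $\M-D'\models\varphi$ iff $\M\models\varphi$, and this is checkable in polynomial time. For the inductive step, the crucial observation I would establish is a monotonicity-type lemma analogous to \Cref{lem:M_models_if_M'_models}: for formulas built from $\AF$ and $\AG$, \emph{adding} edges back can only help $\AF$-type (``something good eventually happens on every path'') properties and hurt $\AG$-type properties, and vice versa. More usefully, I expect that for the purely universal fragment $\{\AF,\AG\}$ the extremal choice is essentially forced — either keeping the full model $\M-D$ (i.e.\ the empty extension $D'=D$) already works, or no extension does, at least for the $\AG$ and $\AF\AG$ outermost shapes, by the same argument as \Cref{lem:M_models_if_M'_models} once one checks that $\AF$ and $\AG$ are both ``negation-free'' in the relevant sense. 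The subtle shapes are $\AF\psi$ and $\AG\AF\psi$, where deleting edges to prune ``bad'' successors can be necessary; here I would argue that the set of worlds from which $\psi$ is (recursively) satisfiable in \emph{some} submodel can be computed bottom-up, and then a greedy fixpoint over the transition relation — keeping exactly those edges whose endpoints can still reach a $\psi$-good region while preserving totality — decides feasibility. Each layer of the induction triggers one polynomial-time model-checking / fixpoint computation on a graph of size $|\M|$, and the number of layers is bounded by the (trimmed, hence linear) temporal depth of $\varphi$, so the whole procedure runs in polynomial time.

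The main obstacle I anticipate is the interaction between the \emph{totality} constraint on submodels and the pruning needed for the $\AF$ and $\AG\AF$ cases: one cannot simply delete every edge leading into a ``hopeless'' world, because that might leave some surviving world with no outgoing edge. So the heart of the argument will be to show that the greedy/fixpoint construction — iteratively discarding worlds that are forced to be hopeless and edges that are forced to be bad, then re-checking totality — either stabilises at a valid submodel witnessing $\varphi$ or proves that none exists. Showing this fixpoint is both sound (the submodel it returns really satisfies $\varphi$) and complete (if it fails, no $D'\supseteq D$ works) is the technical core; everything else is bookkeeping over the four trimmed shapes from \Cref{lem:FG_trimming}. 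Once $\ExtendSubmodel{\AF,\AG}\in\Ptime$ is established, \Cref{prop:ExtendSolInC-EInDelC} immediately gives $\CTLsubmodel{\AF,\AG}\in\DelP$.
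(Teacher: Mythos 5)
Your overall strategy (apply \Cref{prop:ExtendSolInC-EInDelC} after trimming the formula with \Cref{lem:FG_trimming}) matches the paper's, but the core of your argument has a genuine error and a genuine gap. First, note that in this section the fragment $\{\AF,\AG\}$ excludes \emph{all} Boolean connectors, so after trimming the formula is literally one of $\AF x$, $\AG x$, $\AF\AG x$, $\AG\AF x$ for an atom $x$; your inductive machinery over Boolean combinations and nested blocks is solving a different (and unaddressed) problem. Second, and more importantly, your claim that for the $\AG$ and $\AF\AG$ shapes ``either the empty extension $D'=D$ works or no extension does, by the same argument as \Cref{lem:M_models_if_M'_models}'' has the monotonicity backwards. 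That lemma concerns negation-free \emph{existential} formulas, which can only become falser under deletion; universal formulas like $\AG x$ become \emph{easier} to satisfy in submodels, since a submodel has fewer paths. A concrete counterexample: a root labelled $x$ with two looping successors, one labelled $x$ and one not. Here $\M - D \not\models \AG x$, yet deleting the unlabelled successor yields a satisfying submodel, so your test would wrongly reject.

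The idea you are missing --- and which the paper uses for all four shapes --- is that the existence of \emph{some} satisfying submodel of $\M-D$ is equivalent to $\M-D$ satisfying a corresponding \emph{existential} formula, because a single witnessing path induces a submodel in which every path is a suffix of that path, so the universal property holds there. Concretely: check $\M-D\models\EF x$ for $\AF x$, $\M-D\models\EG x$ for $\AG x$, $\M-D\models\EF\EG x$ for $\AF\AG x$, and for $\AG\AF x$ relabel each $x$-world $w'$ with a fresh atom $x_{w'}$ and check $\EF(x_{w'}\land\EX\EF x_{w'})$ for some $w'$ (i.e., a reachable cycle through an $x$-world). Each is a single polynomial-time model check, with no fixpoint construction and no totality bookkeeping needed. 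Your proposed ``greedy fixpoint'' for the $\AF$ and $\AG\AF$ cases is exactly the part you flag as the technical core, and it is left entirely unproven; as written, the proposal does not establish $\ExtendSubmodel{\AF,\AG}\in\Ptime$.
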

\begin{proof}
    The following algorithm decides $\ExtendSubmodel{\AF,\AG}$ deterministically and in polynomial time.

    The input is $\langle \M, \varphi, D \rangle$, where $\M = (W, R, \eta, r)$ is a Kripke model, $\varphi$ is a $\{\AF,\AG\}$-formula, and $D$ is a set of deletions.
    Let $\M' = (W', R', \eta, r) \coloneqq \M - D$ be current submodel and $\varphi'$ be the shortened formula obtained from $\varphi$ using \Cref{lem:FG_trimming}.
    Notice that $\varphi'$ can only have one of four forms.

    Now, the algorithm has the following behaviour, depending on $\varphi'$, where $x$ is in $\Prop$:
    \begin{itemize}
        \item $\varphi' = \AF x$: if $\M' \models \EF x$ accept, else reject.
        \item $\varphi' = \AG x$: if $\M' \models \EG x$ accept, else reject.
        \item $\varphi' = \AF\AG x$: if $\M' \models \EF\EG x$ accept, else reject.
        \item $\varphi' = \AG\AF x$: let $\hat{\M} = (W', R', \hat{\eta}, r)$ be the submodel $\M'$ but with a new labeling function $\hat{\eta}$ defined as
              $
                  \hat{\eta}(w') \coloneqq \{x_{w'}\}
              $
              for all $w' \in W'$ with $x \in \eta(w')$.

              Accept if $\hat{\M} \models \EF (x_{w'} \land \EX\EF x_{w'})$ for some $w' \in W'$, else reject.
    \end{itemize}

    Correctness of the first two cases is trivial.
    A path witnessing $\EF x$ or $\EG x$ induces a submodel, where $\AF x$ or $\AG x$ holds, respectively.
    The third case is also quite obvious.
    If $\M' \models \EF\EG x$, then there is a path $\pi$ and a $k$ such that $\M', \pi[k] \models \EG x$.
    Let $\sigma$ be the path witnessing $\M', \pi[k] \models \EG x$, we than have a path $\rho = \pi[1], \dots, \pi[k-1], \rho[1] (= \pi[k]), \rho[2], \dots$ which induces a submodel satisfying $\varphi' = \AF\AG x$.

    For $\AG\AF x$ this approach does not work.
    Consider the following model $\M_0$:
    \begin{center}
        \begin{tikzpicture}[scale=0.9]
            \node[world,label={140:$w_1$}] at (0,   1.5) (w1) {};
            \node[world,label={ 40:$w_2$}] at (1.5, 1.5) (w2) {};
            \node[world,label={140:$w_3$}] at (0,     0) (w3) {$x$};
            \node[world,label={ 40:$w_4$}] at (1.5,   0) (w4) {};

            \draw[Stealth-] (w1.west) -- ++(-0.5,0);
            \draw[-Stealth] (w1) to[out =-20, in =-160] (w2);
            \draw[-Stealth] (w1) -- (w3);
            \draw[-Stealth] (w2) to[out = 160, in = 20] (w1);
            \draw[-Stealth] (w2) -- (w3);
            \draw[-Stealth] (w3) -- (w4);
            \draw[-Stealth] (w4) edge[loop right,>=Stealth] (w4);
        \end{tikzpicture}
    \end{center}
    While $\M_0 \models \EG\EF x$ holds, with $\pi = w_1,w_2,w_1,w_2\dots$ as witness, no submodel can satisfy $\AG\AF x$, because all submodel $\M'_0 \subseteq \M_0$ contain $w_4$ and $\M'_0, w_4 \not\models \AF x$, which means $\M'_0 \not\models \AG \AF x$.

    \medskip

    Observer that $\AG\AF x$ implies that \emph{all} path contain infinitely many worlds where $x$ holds.
    Since our models are finite it follows that at least on such world must occur on the path infinitely often.

    We mimic this property in terms of model checking by first constructing another model $\hat{\M}$, where each world $w' \in W'$ labeled with $x$ gets a new and unique label $x_{w'}$.
    Secondly, we construct a formula as a disjunction of $\EF (x_{w'} \land \EX\EF x_{w'})$.
    Notice that this disjunction is true, if and only if the model has at least one cycle containing a world labeled with $x_{w'}$ and thereby a path which contains this worlds infinitely often.
    \begin{align*}
         & \hat{\M} \models \EF (x_{w'} \land \EX\EF x_{w'})                                                                           \\
         & \Leftrightarrow \exists \pi \in \Pi(\hat{\M}) \exists k \geq 1: \hat{\M}, \pi[k] \models x_{w'} \text{ and } \hat{\M}, \pi[k] \models \EX\EF x_{w'}               \\
         & \Leftrightarrow \exists \pi \in \Pi(\hat{\M}) \exists k \geq 1: \hat{\M}, \pi[k] \models x_{w'} \text{ and } \hat{\M}, \pi[k+1] \models \EF x_{w'}                \\
         & \Leftrightarrow \exists \pi \in \Pi(\hat{\M}) \exists k \geq 1: \hat{\M}, \pi[k] \models x_{w'} \text{ and } \exists j \geq k+1 : \hat{\M}, \pi[j] \models x_{w'} \\
         & \Leftrightarrow \exists \pi \in \Pi(\hat{\M}) \exists k \geq 1: \pi[k] = w' \text{ and } \exists j > k : \pi[j] = w'
    \end{align*}
    It then follows that the path
    \begin{equation*}
        \rho = \pi[1], \dots \pi[k-1], \pi[k], \pi[k+1], \dots, \pi[j-1], \pi[k] ( = \pi[j]), \pi[k+1], \dots
    \end{equation*}
    of $\hat{\M}$ induces a satisfying submodel of $\M'$.

    \medskip

    Constructing $\varphi'$ and model checking the first three cases can clearly be done in polynomial time.
    For the fourth case we additionally need to construct a new submodel.
    But since the size of the new model is identical to the old one, this means it can also be done in polynomial time.
    The size of the disjunction is linear in the number of worlds of the submodel.
    Its construction and the model checking can therefore be done in polynomial time.
\end{proof}
Let us illustrate the behaviour of the algorithm with an example.
\begin{example}\label{ex:AFAG_algo}
    Let $\M$ be the Kripke model depicted in \Cref{fig:ex:modelFG}.
    Further let
    $$
        \varphi \coloneqq \AF\AG\AG\AF x.
    $$
    We now call the algorithm from \Cref{th:FG_inDelP} on the input $\langle \M, \varphi, \emptyset \rangle$.

    The first step is to shrink $\varphi$.
    Note that with (1.) from \Cref{lem:FG_trimming} $\varphi \equiv \AF\AG\AF x$ and with (2.) $\AF\AG\AF x \equiv \AG\AF x \eqqcolon \varphi'$.
    So we proceed as follows.

    First we construct the model $\hat{\M}$ (see \Cref{fig:ex:modelFG_relabeled}) and the formula
    $$
        \psi \coloneqq \EF (x_{w_2} \land \EX\EF x_{w_2}) \lor \EF (x_{w_3} \land \EX\EF x_{w_3}).
    $$
    The algorithm then uses a model checking algorithm to test whether $\hat{\M} \models \psi$ holds.
    The model checking algorithm will return true in our case, so our algorithm accepts.

    The model induced by a path witnessing $\psi$ can be seen in \Cref{fig:ex:submodelFG}, also notice that this model obviously satisfies $\varphi'$ and thereby $\varphi$.
    \begin{figure}
        
        \centering
        \begin{subfigure}{0.35\columnwidth}
            \centering
            \begin{tikzpicture}[scale=1]
                \node[world,label={140:$w_1$}] at (0,   1.5) (w1) {};
                \node[world,label={ 40:$w_2$}] at (1.5, 1.5) (w2) {$x$};
                \node[world,label={140:$w_3$}] at (0,     0) (w3) {$x$};
                \node[world,label={ 40:$w_4$}] at (1.5,   0) (w4) {};
                \node[world,label={ 40:$w_5$}] at (3,   1.5) (w5) {};

                \draw[Stealth-] (w1.west) -- ++(-0.5,0);
                \draw[-Stealth] (w1) -- (w2);
                \draw[-Stealth] (w1) -- (w3);
                \draw[-Stealth] (w2) -- (w5);
                \draw[-Stealth] (w5) edge[loop below,>=Stealth] (w5);
                \draw[-Stealth] (w3) -- (w2);
                \draw[-Stealth] (w3) to[out =-20, in =-160] (w4);
                \draw[-Stealth] (w4) to[out = 160, in = 20] (w3);
            \end{tikzpicture}
            \caption{Model $\M$.}
            \label{fig:ex:modelFG}
        \end{subfigure}

        \begin{subfigure}{0.5\columnwidth}
            \centering
            \begin{tikzpicture}[scale=1]
                \node[world,label={140:$w_1$}] at (0,   1.5) (w1) {};
                \node[world,label={ 40:$w_2$}] at (1.5, 1.5) (w2) {$x_{w_2}$};
                \node[world,label={140:$w_3$}] at (0,     0) (w3) {$x_{w_3}$};
                \node[world,label={ 40:$w_4$}] at (1.5,   0) (w4) {};
                \node[world,label={ 40:$w_5$}] at (3,   1.5) (w5) {};

                \draw[Stealth-] (w1.west) -- ++(-0.5,0);
                \draw[-Stealth] (w1) -- (w2);
                \draw[-Stealth] (w1) -- (w3);
                \draw[-Stealth] (w2) -- (w5);
                \draw[-Stealth] (w5) edge[loop below,>=Stealth] (w5);
                \draw[-Stealth] (w3) -- (w2);
                \draw[-Stealth] (w3) to[out =-20, in =-160] (w4);
                \draw[-Stealth] (w4) to[out = 160, in = 20] (w3);
            \end{tikzpicture}
            \caption{Model $\hat{\M}$. Identical frame, but different labels.}
            \label{fig:ex:modelFG_relabeled}
        \end{subfigure}
        \hfill
        \begin{subfigure}{0.35\columnwidth}
            \centering
            \begin{tikzpicture}[scale=1]
                \node[world,label={140:$w_1$}] at (0, 1.5) (w1) {};
                \node[world,label={140:$w_3$}] at (0,   0) (w3) {$x$};
                \node[world,label={ 40:$w_4$}] at (1.5, 0) (w4) {};

                \draw[Stealth-] (w1.west) -- ++(-0.5,0);
                \draw[-Stealth] (w1) -- (w3);
                \draw[-Stealth] (w3) to[out =-20, in =-160] (w4);
                \draw[-Stealth] (w4) to[out = 160, in = 20] (w3);
            \end{tikzpicture}
            \caption{Induced submodel of a path witnessing $\EF (x_{w_3} \land \EX\EF x_{w_3})$.}
            \label{fig:ex:submodelFG}
        \end{subfigure}
        \caption{Original and intermediate Kripke model of \Cref{ex:AFAG_algo} as well as the submodel found by the algorithm, that satisfies $\varphi' = \AG\AF x$.}
    \end{figure}
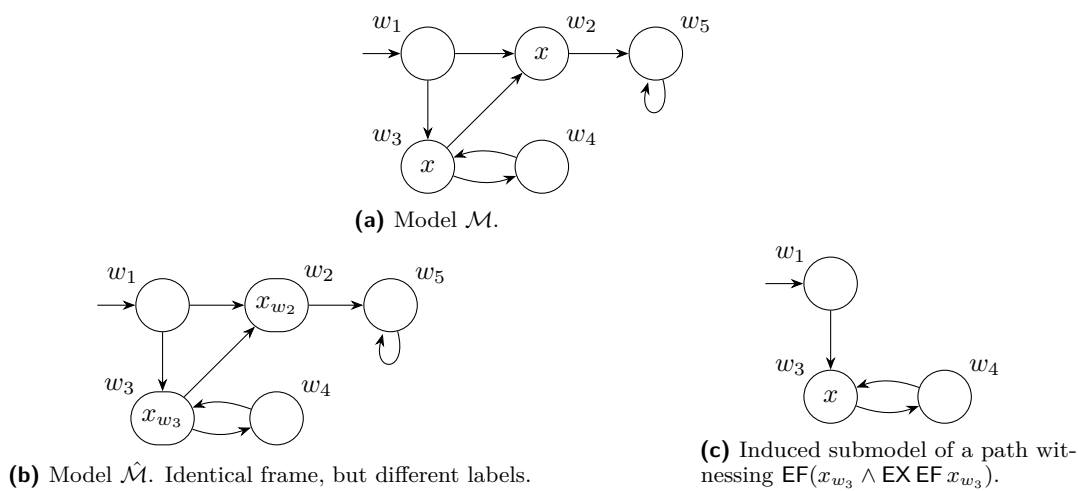
\end{example}

\section{Conclusion and outlook}
In this paper, we have presented a complete study of the submodel enumeration problem for the temporal logic CTL with respect to restrictions on the allowed CTL operators.
We have examined all CTL operator fragments and show $\DelNP$-completeness for every possible fragment in the presence of all Boolean connectors.
This paints a completely negative picture and precludes using the debugging approach as motivated in this setting.
As a silver lining on the horizon, we presented fragments obtained by constraints on  Boolean functions, allowing for fast $\DelP$ algorithms that could be used for bugfix recommendations.
We are currently planning to extend this approach to a complete picture for all Boolean fragments and combinations with CTL operator fragments.
In particular, this leads to a very large number of possible fragments: as a rough estimate, one has to consider seven Boolean fragments, which, combined with ten CTL operators, lead to an astonishing number of $7\cdot 2^{10}=7168$ cases. 
As future work, it would be worthwhile to apply the framework of parameterised complexity~\cite{DBLP:series/mcs/DowneyF99} aiming at more efficient subcases.
Another pressing issue is to investigate the motivated debugging approach using enumeration algorithms in a feasibility study.
Furthermore, submodel enumeration is just one of many possible enumeration problems for CTL.
Other variants worth investigating in this context include (minimal) modifications to $\eta$ instead of, or in addition to, frame modifications.

\bibliography{enumCTL_arXiv}

\end{document}